\documentclass{article}

\usepackage{arxiv}

\usepackage[utf8]{inputenc} 
\usepackage[T1]{fontenc}    
\usepackage{hyperref}       
\usepackage{url}            
\usepackage{booktabs}       
\usepackage{amsfonts}       
\usepackage{nicefrac}       
\usepackage{microtype}      
\usepackage{graphicx}
\usepackage[numbers,sort&compress]{natbib}
\usepackage{doi}

\usepackage{changepage}

\usepackage{etoolbox}

\usepackage{gensymb}

\usepackage{amssymb}
\usepackage{amsthm}
\usepackage{amsmath}
\usepackage{enumitem}
\usepackage{textcomp}

\newtheorem{theorem}{Theorem}[section]

\newtheorem{corollary}[theorem]{Corollary}

\newtheorem{remark}{Remark}[section]

\usepackage{cleveref}
\usepackage{hyperref,xcolor}
\hypersetup{
colorlinks,
linkcolor={blue!90!black},
citecolor={blue!90!black},
urlcolor={blue!80!black}
}

\usepackage{jabbrv}

\usepackage{algorithm}
\usepackage{algorithmicx}

\usepackage{lineno} 

\usepackage[font=small]{caption}

\title{A Fast Diagonalization Algorithm to Enable Singular Value Decomposition of Large Matrices for Efficient Template Matching }

\date{} 					

\author{
  \begin{tabular}{ccc}
    \href{https://orcid.org/0000-0002-0924-4410}{\includegraphics[scale=0.06]{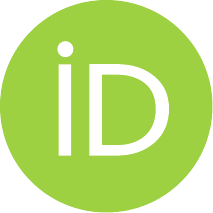}\hspace{1mm}Matthew Giammar}\thanks{\tiny conceptualization, methodology, formal analysis, investigation, data curation, software, visualization, writing -original draft, writing -review and editing.} &
    \href{https://orcid.org/0000-0001-9162-0421}{\includegraphics[scale=0.06]{orcid.pdf}\hspace{1mm}Bronwyn Lucas}\thanks{\tiny conceptualization, supervision, project administration, funding acquisition, writing - original draft, writing -review and editing.} &
    \href{https://orcid.org/0000-0001-6618-631X}{\includegraphics[scale=0.06]{orcid.pdf}\hspace{1mm} Alexander Strang}\thanks{\tiny conceptualization, validation, methodology, formal analysis, investigation, visualization, supervision, project administration, writing -original draft, writing -review and editing.} \\
    \small \textmd{Center for Computational Biology,} & \small \textmd{Department of Molecular and Cell Biology,} & \small \textmd{Department of Statistics,} \\
    \small \textmd{University of California, Berkeley,} & \small \textmd{Center for Computational Biology,} & \small \textmd{University of California, Berkeley,} \\
    \small \textmd{Berkeley, California, USA} & \small \textmd{University of California, Berkeley,} & \small \textmd{Berkeley, California, USA} \\
     \small \textmd{} & \small \textmd{Berkeley, California, USA} & \small \textmd{alexstrang@berkeley.edu} \\
    & \small \texttt{bronwynlucas@berkeley.edu} & \small \texttt{}
  \end{tabular}
}

\renewcommand{\shorttitle}{Fast Diagonalization for Template Matching}

\hypersetup{
pdftitle={A fast diagonalization algorithm to enable singular value decomposition of large matrices for efficient template matching},
pdfauthor={Matthew Giammar, Bronwyn Lucas, Alexander Strang},
pdfkeywords={Symmetry Exploiting Diagonalization, Low Rank Approximation, Template Matching, Cryogenic electron microscopy},
}

\begin{document}

\maketitle

\vspace{-0.4 in}
\begin{center}
    \textbf{SIGNIFICANCE STATEMENT:} \vspace{0 in}
\end{center}

\begin{adjustwidth}{0.0cm}{0.0cm}
High-resolution template matching can be used to annotate macromolecules in cryogenic electron microscopy images of cells by convolution with a series of templates representing hypothesized molecular structures. However, accurate annotation relies on an exhaustive search of over 20 million orientation-depth combinations per template per image making template matching prohibitively expensive for large-scale analyses, even when using a spatial Fast Fourier Transform. Compressed approximations to the corresponding matrix would accelerate the search and enable annotation of more of the proteome towards a more comprehensive understanding of cellular functions. Here, we introduce a general procedure for efficient diagonalization of symmetric matrices and demonstrate its application on large template matching matrices by exploiting hypothesis symmetries in in-plane rotations. This procedure dramatically reduces the cost to compute and store the singular values and vectors of the matrix  (e.g. compressed by $3.5 \times 10^3$ times at $0.01$\% error) and will make compressed template matching feasible.
\end{adjustwidth}

\vspace{0.0 in}

\begin{center}
    \textbf{ABSTRACT:} \vspace{0 in}
\end{center}

\begin{adjustwidth}{0.0cm}{0.0cm}
Many computational problems possess the following three features: (1) the problem could be solved efficiently if a linear operator could be diagonalized, (2) direct diagonalization is infeasible due to scale, but, (3) the operator is known to commute with a permutation since the problem is (a) symmetric with respect to a change of coordinates, or (b) the operator is block circulant. For example, high-resolution template matching demands repeated convolution of large, high resolution images with an exhaustive list of related templates. The associated linear operator could be compressed, via a low rank approximation, if diagonalized. The required decomposition is too expensive to compute directly, but, the entire problem is symmetric to in-plane rotations. In these cases, the desired eigenfunctions are constrained by the symmetry, and these constraints allow efficient decomposition. We illustrate a parallelized algorithm that allows fast diagonalization of any block-circulant matrix. When the index space can be partitioned into $l$ classes of $m$ interchangeable elements, the algorithm reduces storage costs by $m$, from $\mathcal{O}(m^2l^2)$ to $\mathcal{O}(ml^2)$, and if $w$ workers are available, the algorithm demands $\mathcal{O}(l^2m\log(m)/w) + \mathcal{O}(ml^3/w)$ floating point operations per worker yielding a $m^2$ speedup over direct diagonalization. We demonstrate this procedure to decompose a high precision template matching matrix. We compare runtime on a reduced problem, where the fast procedure ran 205 times faster per recovered feature, and recovered 22.5 times as many features. On a typical template matching matrix, the decomposition required 25 times less time than needed to compute the full-scale matrix. We also demonstrate decomposition of a $\sim$30 times larger template matching matrix that covers the complete set of possible projections represented in a cryo-EM image at 2~Å resolution in 14 minutes. This procedure is more stable and allows recovery of in-plane rotations to arbitrary precision. We recommend it for efficient, multi-precision template matching searches.
\end{adjustwidth}

\keywords{Symmetry Exploiting Diagonalization $|$ Low Rank Approximation $|$ Template Matching $|$ Cryogenic electron microscopy}


\newgeometry{left=0.5in, right=0.5in, top=0.8in, bottom=0.7in}

\section{Two-Dimensional Template Matching} \label{sec: 2DTM}

Technical advances in cryogenic electron microscopy (cryo-EM) have made it possible to visualize macromolecules in their native, cellular environment. Parallel advances in protein structure prediction are generating unprecedented libraries of atomic models of macromolecular complexes. Together, these advances could yield molecular atlases of cells and lead to an integrated, mechanistic understanding of living systems. However, it is not possible to directly identify specific molecules in cryo-EM images due to molecular crowding and the low signal-to-noise ratio (SNR) resulting from radiation damage.

Detecting the presence of a signal embedded in noisy data is a foundational problem in signal processing. When the expected signal is known and the noise is additive and random, the matched filter provides the optimal test statistic for detecting a weak signal by rejecting the null hypothesis of noise alone \cite{lugt_1964_signal, brunelli_template_1997}. The output of the matched filter is maximal when the hypothesis matches the signal exactly and search algorithms therefore typically require comparing a set of filters sampled across a relevant parameter space. Template matching, in the context of cryo-EM, is a spatial variant of a matched filter in which templates, corresponding to hypothesized image sub-features are located within an image by cross-correlation.

Template matching has been successfully employed to locate specific macromolecules in cryo-EM images by cross-correlation of simulated projections with an experimental image  \cite{rickgauer_single-protein_2017,rickgauer_structural_2024,lucas_locating_2021}. This procedure, alternately called high-resolution template matching or 2-dimensional template matching (2DTM), has been applied to locate ribosomes in \textit{Mycoplasma pneumoniae} \cite{lucas_locating_2021}, distinguish nuclear intermediates of ribosome biogenesis in yeast cell sections \cite{lucas_situ_2022}, characterize ribosome structural dynamics during translation in the human cell periphery \cite{rickgauer_structural_2024} and visualize drug-target interactions \textit{in situ} \cite{lucas_baited_2023}. 2DTM holds the potential to leverage the growing library of predicted structures to visualize entire ``structureomes" in context \cite{lucas_visualizing_2023}.

The extreme molecular crowding within the cell precludes visual identification of particles and high resolution features are required to separate particles from the background \cite{rickgauer_single-protein_2017,dickerson_tilt_2025}. Detecting specific macromolecules in cellular cryo-EM images therefore requires fine grained sampling of template parameters and results in an extreme computational cost. While the positional search of a template in a larger image can be accelerated via the convolution theorem and the Fast Fourier Transform (FFT), the extreme cellular crowding nevertheless demands exhaustive, fine-grained sampling over orientation \cite{rickgauer_single-protein_2017,lucas_locating_2021}. For example, to sample all possible orientations of a 250~Å particle at the full resolution of the data would require 0.46 degree increments, or $\sim$150 million projections \cite{crowther_reconstruction_1970}.  In practice, the search is performed in two steps with an initial coarser search using 1.6 million orientations  \cite{giammar_leopard-em_2026}. To recover as many particles as possible and to capture 3D information from a 2D image, each one of these orientations is additionally assessed over 13 distinct depths within our sample producing over 20 million distinct orientation-depth combinations \cite{rickgauer_single-protein_2017,giammar_leopard-em_2026}. Parallelization using GPUs has increased throughput \cite{lucas_locating_2021}, but even with modern hardware, annotating an individual molecular species in a typical cryo-EM image ($4096\times4096$ pixels) takes \>4 hours \cite{giammar_leopard-em_2026} limiting its broader application to structural biology.

The most expensive step in the search procedure in 2DTM is cross-correlation which is equivalent to performing a large matrix-vector product. Each row of the matrix represents a hypothesis regarding the particle orientation-depth combination and columns represent a matched filter pixel intensity. We call this operator, which produces cross-correlation values from an image, the \textit{template matching matrix}.
Exhaustive orientational sampling quickly explodes the size of this matrix.

A low rank approximation could accelerate a search by expressing the template matching matrix as a product of smaller matrices. Approximating to rank $r$ reduces the cost of a product with a $m\times n$ matrix from $\mathcal{O}(mn)$ to $\mathcal{O}((m+n)r)$, so provides fast multiplication when $r\ll \text{min}(m, n)$. The utility of low-rank approximations for analysis of cryo-EM images has previously been demonstrated in the classification of molecular states by averaging over statistical noise \cite{vanheel_1981_multivariate, roseman2004findem, heel_multivariate_2016}. The truncated singular value decomposition (SVD) gives the optimal such approximation under any unitary matrix norm, for every fixed rank \cite{eckart1936approximation, mirsky1960symmetric}. After whitening, the SVD also minimizes, for a fixed rank, the expected square error in an approximated cross-correlogram (CCG) produced by applying the low rank approximation to a randomly sampled image (see Appendix \ref{app: optimality of SVD}). Since template matching searches declare detections based on cross-correlogram values, minimizing the expected square error in the CCG minimizes the errors passed downstream to decision making. 

While optimal, the SVD cannot be recovered since full-scale \textit{template matching matrices} ($20 \times 10^6$ rows, $512^2$ columns) are far too large to diagonalize directly. Consistently, a particle detection strategy similar to 2DTM leveraging filter banks of eigenimages has been presented \cite{sigworth_classical_2004}, but has not been adopted in practice because calculating the SVD becomes prohibitively expensive.

While large, the template matching matrix is structured. The collection of hypotheses treat orientations uniformly. When considering orientations alone they only distinguish in-plane rotations with respect to an arbitrarily chosen reference direction used to align the image and template. Each in-plane rotation of the template relative to the image could be achieved by an opposing rotation of the image relative to the template. Therefore, products between the template matching matrix and images should be invariant to the arbitrary choice of reference in-plane rotation. This invariance to in-plane rotations introduces a $SO(2)$ symmetry in the template matching matrix with respect to the projection plane. 

These symmetries could be exploited to accelerate decomposition \cite{diaconis1990efficient}. In particular, \emph{any} linear operator whose action is symmetric with respect to an arbitrary bijective transformation of its index space admits a fast, parallel diagonalization procedure. Symmetries of this kind occur naturally when the coordinates used to express the index space require arbitrary choices (e.g.~selection of an origin).

This paper reviews the fundamental mathematics needed to exploit symmetries when diagonalizing, then applies the associated algorithms to the large template matching matrices required for in-situ cryo-EM. We present the general theory and provide references to parallel sources in Section~\ref{sec: Fast Diagonalization}. We show that the associated eigenfunctions are steerable \cite{freeman1990steerable} under the transformation associated with the symmetry (e.g.~rotation), so admit a separable form related to Fourier series in a coordinate system where the transformation is a translation \cite{Hel-Or_1998_canonical}. We then illustrate an fast algorithm that exploits this separable form.

We use this \emph{symmetry exploiting SVD} procedure to decompose large template matching matrices in Section \ref{sec: efficient template matching}. We show its advantage over direct decomposition in terms of feasibility, runtime, scalability, stability, and reconstruction accuracy. We discuss applications of the resulting low rank approximations for multi-precision search, as well as extensions to larger matrices that represent more hypotheses, in Section \ref{sec: discussion}.


\section{Fast Diagonalization} \label{sec: Fast Diagonalization}



To identify a general problem, we will work outwards from the template matching context.  

\begin{enumerate}
    \item We seek a singular value decomposition of a rectangular matrix. The singular value decomposition of a rectangular matrix $M$ can be recovered from the eigenvalue decomposition of a related square matrix ($M^* M$, $M M^*$, or the block matrix $[0, M ; M^*, 0]$). Accordingly, we pursue a general eigenvalue decomposition algorithm. 
    \item The template matching matrix discretizes an integral operator that treats all orientation and position coordinates continuously. We pursue a general eigenfunction decomposition that applies to linear operators so that our theory addresses different discretizations consistently. 
    \item The template matching matrix exhibits a symmetry derived from an ambiguity in the choice of coordinates used to describe the imaging plane. In-plane rotations of the template can be interchanged with in-plane rotations of the image. This is a symmetry in the action of the linear operator. The cross-correlelogram produced by applying the operator to a rotated image equals the rotated cross-correlelogram produced by applying the operator to an unrotated image. Accordingly, we pursue an eigenvalue decomposition for linear operators that commute with bijective transformations of the underlying index space. 
\end{enumerate}

The spectra of linear operators that are symmetric under a group action is a well studied topic in harmonic analysis \cite{folland2016course}. 
The algorithm we suggest is based on Davis' classic analysis of block circulant matrices \cite{davis1979circulant}. For a modern review see \cite{olson2014circulant}. For applications in mechanics, see \cite{dickens1992modal,thomas1974standing,thomas1979dynamics,tran2001component,tran2009component,williams1986algorithm,williams1986algorithm}. For a similar treatment in statistics, see \cite{diaconis1988group} and \cite{diaconis1990efficient}. 

The same algorithm has been proposed for matched filtering against smooth deformations of a filter, first against in-plane rotations and scalings of a single filter \cite{perona1995deformable,uenohara1998optimal}, where the procedure simplifies, and later against in-plane rotations of multiple filters \cite{hilai1994recognition,jogan2003karhunen,vonesch_2013_design}. Similar procedures have been proposed in single particle cryo-EM for fast, rotation invariant, principal component analysis of ensembles of experimental microscopy images \cite{ponce_2011_computing,zhao2013fourier,zhao_2016_fast}.

Consider a linear operator $M$ that maps from functions $v: \mathcal{X} \rightarrow \mathbb{C}$ to functions $w: \mathcal{X} \rightarrow \mathbb{C}$, where $\mathcal{X}$ is an underlying index space. If $\mathcal{X}$ is a finite set, then we can enumerate its elements, and $M$ may be represented with a square matrix. Alternately, if $\mathcal{X}$ is a subset of $\mathbb{R}^d$, then $M$ may represent an integral or differential operator. 

A function $v: \mathcal{X} \rightarrow \mathbb{C}$ is an eigenfunction of $M$ if $v \neq 0$ and:
\begin{align}
    M[v](x) = \lambda v(x)
\end{align}
for some eigenvalue $\lambda \in \mathbb{C}$. We aim to find all of the eigenpairs of operators $M$ that admit a symmetry of action. 

A linear operator admits a symmetry of action if it commutes with a bijective transformation $\sigma: \mathcal{X} \rightarrow \mathcal{X}$:
\begin{align} \label{eqn: generic symmetry of action}
    M[v \circ \sigma] = M[v] \circ \sigma.
\end{align}

For example, if $M$ is the differential operator that evaluates a derivative, and $\mathcal{X} = \mathbb{R}$, then $M$ commutes with all translations, $\frac{d}{dx} [v(x - s)] = v'(x - s)$. In the template matching setting, $M$ is constructed from a template matching matrix that approximates an integral operator, $\mathcal{X}$ is the imaging plane, $v$ are images, and $\sigma$ is any in-plane rotation.

If $\mathcal{X}$ is a finite set, then $\sigma$ is a permutation. Let $P_{\sigma}$ denote the corresponding permutation matrix. Then $M$ admits a symmetry of action with respect to $\sigma$ if and only if it commutes with the permutation matrix:
\begin{align} \label{eqn: commute with permutation}
    M P_{\sigma} = P_{\sigma} M \quad \Rightarrow \quad M = P_{\sigma}^{-1} M P_{\sigma} = P_{\sigma}^{\intercal} M P_{\sigma}. 
\end{align}
Multiplying by a permutation on the right, and its transpose on the left, re-orders the rows and columns by rearranging their indices. So, a square matrix admits a symmetry of action if and only if it is invariant to some matched reordering of the column and row indices.  

Symmetries of this kind occur across diverse applications. For example, if $M$ is an adjacency matrix, or Laplacian, defined on a graph $\mathcal{G}$, and some of the nodes in $\mathcal{G}$ are indistinguishable by topology alone, then $M$ will admit a symmetry of action with respect to permutations of those node labels \cite{godsil2013algebraic}. Alternately, if $M$ is a covariance matrix for a collection of random variables whose joint distribution is invariant to some permutations of the variable indices, then $M$ will admit a symmetry of action with respect to permutations of those indices.

Matrices that commute with permutations can be diagonalized efficiently since any pair of diagonalizable matrices that commute are simultaneously diagonalizable \cite{hom1985matrix}. Therefore, the eigenvectors of $M$ are constrained to the eigenspaces of $P_{\sigma}$. Since the possible eigenvectors of any permutation matrix are known analytically, recognizing the symmetry dramatically reduces the degrees of freedom in the decomposition of $M$ that must be computed. If the eigenvalues of $P_{\sigma}$ are all distinct, then its eigenvectors are uniquely defined, so the eigenvectors of $M$ are fully determined by the eigenvectors of $P_{\sigma}$. If some of the eigenvalues of $P_{\sigma}$ repeat, then some of the eigenspaces of $P_{\sigma}$ are multidimensional, so the eigenvectors of $M$ are not uniquely defined by $P_{\sigma}$ alone. In this case, the eigenvectors of $M$ are restricted to linear combinations of eigenvectors of $P_{\sigma}$ with matching eigenvalues.  The number of remaining degrees of freedom in the eigenvectors of $M$ equals the sum, over all repeated eigenvalues of $P_{\sigma}$, of the geometric multiplicity of each eigenvalue minus one.

Restricting the eigenvectors of $M$ to combinations of eigenvectors of $P_{\sigma}$ constrains the eigenvectors of $M$ to the discretization of functions that are ``steerable" under $\sigma$. A steerable function space is a finite dimensional subspace of functions that is closed under a family of smooth deformations (e.g. translations, rotations, scalings) \cite{freeman1990steerable}. All deformations of steerable functions may be represented by varying the coefficients in an expansion of the function on a fixed basis \cite{Hel-Or_1998_canonical}. All eigenvectors of $P_{\sigma}$ are steerable under the action of the permutation since, by definition, they must remain proportional to themselves after permutation. Any linear combination of eigenvectors of $P_{\sigma}$ that share the same eigenvalue is also an eigenvector of $P_{\sigma}$. Therefore, all of the eigenvectors of $M$ must represent functions that are steerable under the transformation $\sigma$. 

The space of steerable functions associated with a family of smooth deformations may be derived from the associated Lie group \cite{Hel-Or_1998_canonical,teo1998lie,michaelis1995lie}. A Lie group is a collection of parameterized deformations that satisfies the standard group conditions and whose inverse and composition vary smoothly in the deformation parameters \cite{cohen1911introduction}. The generator of the Lie group is the differential operator that performs infinitesimal deformations \cite{Hel-Or_1998_canonical,teo1998lie}. Standard deformations have simple generators. For instance, the generators associated with translation, rotation, and scaling are the partial derivative in the direction of translation, the angular partial derivative, and the radial logarithmic derivative. Non-infinitesimal deformations may be represented by exponentiating the generator, or applying it iteratively via a series expansion. Function spaces that are steerable under the deformation must remain closed under arbitrary group actions, so must remain closed under the generator, its iteration, and its exponential. Setting a function proportional to itself under the action of the generator produces a differential equation whose fundamental solutions span the space of steerable functions \cite{Hel-Or_1998_canonical}. For instance, if $f(x)$ is steerable under translations, then $f(x) \propto - \frac{d}{dx} f(x)$ for some $\lambda$, so $f(x) = f(0) e^{-\gamma x}$ for some constant $\gamma$. 

The solution for translations extends to all single parameter Lie groups since every single parameter Lie group is equivalent to a translation after a change of coordinates \cite{cohen1911introduction}. For example, rotations are translations in the angular coordinate if we represent a plane using polar coordinates. As a result, the steerable functions associated with any single parameter Lie group are exponential functions in the coordinate that the deformation translates \cite{Hel-Or_1998_canonical}. If translation in that coordinate acts periodically, then the coefficient of the exponential must be imaginary, so the associated exponential functions are Fourier basis functions. It will follow that the features produced by a SVD of a template matching matrix should be separable functions, constructed as the product of a radial function with an angular Fourier mode. The same observations apply to multi-family Abelian Lie groups (e.g. translation in multiple dimensions) \cite{segman1992canonical}, but do not apply for non-Abelian groups that cannot be converted to translation operations in a new coordinate system (e.g. rotations in three-dimensions) \cite{Hel-Or_1998_canonical}. 

To derive these results for arbitrary bijections, $\sigma$, we will adopt a coordinate system for the index space $\mathcal{X}$ in which $\sigma$ acts as a translation, then iteratively apply $\sigma$ to constrain the space of steerable functions. 
As in \cite{diaconis1988group}, the appropriate coordinate system is produced by the orbit decomposition of $\mathcal{X}$ under the cyclic group generated by  $\sigma$ \cite{dummit2003abstract,serre1977linear}. The orbit decomposition is constructed in three steps:

\begin{enumerate}
    \item Define an equivalence relation $x \sim x'$ if $x' = \sigma^n(x)$ or $x' = \sigma^{-n}(x)$ for some $n$. Then $x$ and $x'$ are equivalent if it is possible to convert one into the other by repeated application of the permutation.
    \item Let $\mathcal{X} \setminus \sigma$ denote the quotient space of $\mathcal{X}$ with respect to $\sim$. It contains the equivalence classes created by $\sigma$. Let $E(x) \in \mathcal{X} \setminus \sigma$ denote the equivalence class containing an element of the index space, $x$. 
    \item Assign each equivalence class $E \in \mathcal{X} \setminus \sigma$ a representative element $x_0(E)$. Then, index the elements of $E$ by the number of applications of the permutation needed to convert $x_0(E)$ to $x$; $j(x) = n$ if $x = \sigma^n(x_0(E(x)))$.
\end{enumerate}

If $\mathcal{X}$ is finite, then the orbit decomposition assigns each element $x \in \mathcal{X}$ a pair of indices that can be arranged lexicographically. Each $x$ is assigned an outer index representing its equivalence class, and an inner index representing its position in its equivalence class, $x \rightarrow (E(x),j(x))$. In these coordinates, every permutation acts as translation in a periodic domain, $\sigma^n(x) \rightarrow (E(x), j(x) + n)$ where $j(x) + n$ is meant modulo $|E(x)|$ \cite{dummit2003abstract,serre1977linear}. Extending this argument to smooth single-parameter deformations by considering a limit of infinitesimal deformation recovers the coordinate system where the action of the Lie group is equivalent to translation \cite{cohen1911introduction}.

In our applied problem, the template matching matrix admits a symmetry of action with respect to in-plane rotations. In-plane rotations exchange any pair of points that are equidistant from the origin. So, $\mathcal{X}$ is the image plane, the equivalence classes are concentric circles centered at the origin, and the quotient space $\mathcal{X} \setminus \sigma$ is indexed by radii. If we set $x_0(E) = [\|x\|,0]$, then position within equivalence class is indexed by angle counter-clockwise from the horizontal axis. Thus, the orbit decomposition expresses every point in the plane using polar coordinates: $(E(x),j(x)) \leftrightarrow (\rho(x),\psi(x)).$ In-plane rotations are represented by translation in the angular coordinate. 

If $M$ commutes with $P_{\sigma}$, then ordering its indices lexicographically according to the orbit decomposition reveals its structure. Arrange its columns and rows into blocks indexed by a pair of equivalence classes $E, E'$. Let $M^{(E,E')}$ denote the matching block. Then, applying $P_{\sigma}$ to the left and right sends $M^{(E,E')}_{i,j}$ to $M^{(E,E')}_{i + 1, j + 1}$. So, to maintain symmetry:
\begin{align} \label{eqn: block circulant}
    M^{(E,E')}_{i,j} = M^{(E,E')}_{i+n,j+n}
\end{align}
for all $n$. It follows that $M^{(E,E')}$ is a circulant matrix \cite{davis1979circulant,gray2006toeplitz} and $M$ is block circulant \cite{olson2014circulant}. 

The eigenvectors of $M$ may also be expressed elegantly in these coordinates.


\begin{theorem} \label{thm: separable eigenfunctions}
     Suppose that $M$ is a linear operator that maps functions on $\mathcal{X}$ to functions on $\mathcal{X}$, and $M$ admits a symmetry of action with respect to a bijective mapping $\sigma: \mathcal{X} \rightarrow \mathcal{X}$. If $v$ is an eigenfunction of $M$ with a simple eigenvalue $\lambda$ and $v$ is selected so that $|v(x)|$ does not diverge for any sequence of inputs $x$, then $v$ must be a separable function of the form:
    \begin{align} \label{eqn: separable form}
        v(x) = a(E(x)) \times e^{i 2 \pi \omega j(x)}
    \end{align}
    for some function $a: \mathcal{X} \setminus \sigma \rightarrow \mathbb{C}$ and some scalar $\omega \in \mathbb{R}$. 
\end{theorem}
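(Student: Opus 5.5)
The plan is to show that a simple eigenfunction of $M$ must also be an eigenfunction of the composition operator $T_\sigma: v \mapsto v\circ\sigma$, and then to solve the resulting functional equation in the orbit coordinates $(E(x),j(x))$.

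\emph{Step 1: invariance of the eigenspace.} Suppose $M[v]=\lambda v$. By the symmetry of action \eqref{eqn: generic symmetry of action},
\[
M[v\circ\sigma] = M[v]\circ\sigma = \lambda\, (v\circ\sigma).
\]
Because $\sigma$ is a bijection, $v\circ\sigma\neq 0$, so $v\circ\sigma$ is also an eigenfunction with eigenvalue $\lambda$. Since $\lambda$ is simple, its eigenspace is one-dimensional, so $v\circ\sigma = c\,v$ for some $c\in\mathbb{C}$. Moreover $c\neq 0$, because $v\circ\sigma\neq 0$.

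\emph{Step 2: iterate along orbits.} Fix a class $E$ with representative $x_0(E)$. By induction, $v(\sigma^n(x)) = c^n v(x)$ for all $n\ge 0$. Applying the same argument with $\sigma^{-1}$, which gives $v = c\,(v\circ\sigma^{-1})$, extends this to negative $n$. Taking $x=x_0(E)$ and using the definition of $j$, this gives
\[
v(x) = v(x_0(E(x)))\, c^{j(x)}.
\]
Define $a(E) = v(x_0(E))$.

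\emph{Step 3: the boundedness hypothesis is the main obstacle.} We must show $|c|=1$. If some class has $a(E)\neq 0$ (such a class exists since $v\neq 0$) and that orbit is infinite, take $x_n=\sigma^n(x_0(E))$ with $n\to\pm\infty$. Then $|v(x_n)| = |a(E)|\,|c|^n$ diverges unless $|c|=1$, and the hypothesis that $|v|$ does not diverge along any sequence forces $|c|=1$. If instead that orbit is finite, of size $m$, then $\sigma^m(x_0)=x_0$ gives $c^m a(E)=a(E)$, so $c^m=1$ and again $|c|=1$.

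Once $|c|=1$, write $c = e^{i2\pi\omega}$ with $\omega\in\mathbb{R}$, which yields the separable form \eqref{eqn: separable form}. The constant $c$ is global rather than class-dependent, because it was obtained from the single relation $v\circ\sigma = c\,v$; hence the same $\omega$ serves every class. One care point: on a finite orbit, $j$ is only defined modulo $|E|$, so $c^{j}$ must be well defined. This holds because $c^{|E|}=1$ whenever $a(E)\neq 0$; when $a(E)=0$ the value of $v$ on that orbit is zero regardless of how $j$ is read.
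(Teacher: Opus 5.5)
Your proof is correct and follows the paper's argument. Both show that the $\lambda$-eigenspace is invariant under $v\mapsto v\circ\sigma$, use one-dimensionality to get $v\circ\sigma = c\,v$, recurse along each orbit to obtain $v(E,j)=a(E)\,c^{j}$, and then use periodicity (finite orbits) or the non-divergence hypothesis (infinite orbits) to force $|c|=1$. Your extra care points (that $c\neq 0$, the extension to negative iterates via $\sigma^{-1}$, that $c$ is a single global constant, and that $c^{j(x)}$ is well defined on finite orbits) tighten details the paper leaves implicit without changing the route.
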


\begin{remark}
The separable form provided in \eqref{eqn: separable form} is steerable under the action of the permutation. To permute $n$ times, multiply by $e^{i 2 \pi \omega n}$. 
\end{remark}

\begin{remark} In the template matching setting $\mathcal{X}$ is $\mathbb{R}^2$, and $\sigma$ is any in-plane rotation. Accordingly, the orbit decomposition expresses $\mathbb{R}^2$ in polar coordinates, where equivalence classes are indexed by radius, $\rho$, and position within equivalence class are indexed by in-plane angle, $\psi$. Then, Theorem \ref{thm: separable eigenfunctions} implies that all eigenfunctions of an integral operator that commutes with $\sigma$ must take the separable form:
\begin{equation}
    v(\rho,\psi) = a(\rho) e^{i k \psi}
\end{equation}
where $a(\rho)$ is a complex valued function that assigns amplitudes to radii, $k$ is an integer wave-number, and $\psi$ is expressed in radians. The wavenumber $k$ is integer valued to ensure periodicity. So, to diagonalize a template matching operator, we only need recover the radial amplitude functions, $a(\rho)$. Past work in cryo-EM has observed and exploited this structure for rotation invariant principal component analyses of single particle images (c.f.~\cite{sigworth_classical_2004, zhao_2014_rotationally,ponce_2011_computing}).
\end{remark}

\begin{corollary} \label{cor: separable form discrete}
    If the assumptions of Theorem \ref{thm: separable eigenfunctions} hold, and $\mathcal{X}$ is finite, then:
    \begin{align} \label{eqn: separable form finite case}
        v(x) = a(E(x)) \times \phi(j(x);k,|E(x)|), \quad \text{ where } \quad \phi(j;k,n) = \tfrac{1}{\sqrt{n}}e^{i 2 \pi \tfrac{k}{n} j}
    \end{align}
    is the discrete Fourier transform (DFT) basis function on vectors of length $n$ with wavenumber $k \in [0,n-1]$. 
\end{corollary}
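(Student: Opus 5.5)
We deduce the corollary directly from Theorem~\ref{thm: separable eigenfunctions}, using finiteness of $\mathcal{X}$ to pin down the admissible frequencies $\omega$. Since $\mathcal{X}$ is finite, every function on it is bounded, so the non-divergence hypothesis is automatic. The theorem then gives
\begin{equation*}
v(x) = a(E(x))\, e^{i 2\pi \omega j(x)}
\end{equation*}
for some function $a$ on $\mathcal{X}\setminus\sigma$ and some $\omega\in\mathbb{R}$. The remaining work is to show that, on each orbit $E$ where $a(E)\neq 0$, the factor $e^{i2\pi\omega j}$ is a DFT basis function on $\mathbb{Z}/|E|$. After that, the normalization $1/\sqrt{|E|}$ can be absorbed into $a$.

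\textbf{Step 1: quantize the frequency.} On a finite orbit $E$ of size $n=|E|$, the coordinate $j(x)$ is defined only modulo $n$, because $\sigma^n(x_0(E)) = x_0(E)$. Hence $j$ and $j+n$ label the same point. Well-definedness of $v$ on any orbit with $a(E)\neq 0$ forces
\begin{equation*}
e^{i2\pi\omega (j+n)} = e^{i2\pi\omega j}, \qquad\text{that is,}\qquad e^{i2\pi\omega n}=1.
\end{equation*}
Therefore $\omega n\in\mathbb{Z}$, so $\omega = k/n$ for some integer $k$. Reducing $k$ modulo $n$ gives $k\in[0,n-1]$. The same conclusion can be read off the eigen-relation: $v\circ\sigma$ equals $v$ times $e^{i2\pi\omega}$ on the support, and $\sigma^n$ fixes $E$ pointwise.

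\textbf{Step 2: allow orbit-dependent $k$ and rescale.} Orbits may have different sizes, and then a single $\omega$ yields a different integer $k = \omega|E|$ on each orbit. This is the one subtle point, and it explains why the statement writes $k$ inside $\phi(\cdot\,;k,|E(x)|)$ with $n=|E(x)|$ varying. The unifying quantity is the eigenvalue $\mu = e^{i2\pi\omega}$ of $P_\sigma$. On each supporting orbit $E$, $\mu$ is an $|E|$-th root of unity, so $\mu = e^{i2\pi k_E/|E|}$ for an integer $k_E$. Orbits whose size is incompatible with $\mu$ must have $a(E)=0$, so they impose no constraint.

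We then define $\tilde a(E) = \sqrt{|E|}\,a(E)$. This rewrites $v(x) = \tilde a(E(x))\,\phi(j(x);k,|E(x)|)$, which is exactly the form in \eqref{eqn: separable form finite case}.

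The main obstacle is making Step~1 rigorous, namely the single-valuedness of $j$ modulo $|E|$, and handling orbits of different sizes consistently. The cleanest route is the eigen-relation argument: since $v$ is an eigenvector of $P_\sigma$ with eigenvalue $\mu$, applying $P_\sigma^{|E|}$, which acts as the identity on $E$, gives $\mu^{|E|}=1$ wherever $a(E)\neq0$.
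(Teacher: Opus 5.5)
Your proposal is correct and follows the paper's argument: periodicity of $e^{i2\pi\omega j}$ on each finite orbit forces $\omega|E|\in\mathbb{Z}$, which yields an orbit-dependent integer wavenumber $k(E)$, exactly as in the last paragraph of the paper's proof. You also absorb the $1/\sqrt{|E|}$ normalization into $a$ explicitly and note that only orbits with $a(E)\neq 0$ constrain $\omega$; the paper's proof leaves both points implicit, and mentions the second only in the remark that follows it.
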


\begin{proof}

Suppose that $v(x)$ is an eigenfunction of $M$. Then $\lambda v = M[v] $. Since $M$ commutes with $\sigma$, it must also be true that $M[v \circ \sigma] =  M[v] \circ \sigma = (\lambda v) \circ \sigma = \lambda (v \circ \sigma)$. Recursing the argument, $v \circ \sigma^n$ must be an eigenfunction of $M$ with eigenvalue $\lambda$ for all $n$.

If $\lambda$ is simple, then the associated eigenspace is one-dimensional \cite{strang2012linear}. So, it must be true that $v \circ \sigma^n (x) \propto v(x) $ for each $n$. Therefore, every simple eigenfunction must be proportional to itself under application of the transform $\sigma$, e.g.~every simple eigenfunction is steerable.

If we express $x \rightarrow (E(x),j(x))$, then $v \circ \sigma^n (x) = v(E(x), j(x) + n)$. Therefore, when $\lambda$ is simple:
\begin{align} \label{eqn: translation within equivalence class}
v(E,j + n) \propto v(E,j) 
\end{align}
for all equivalence classes $E$, indices $j$, and iteration counts $n$. In other words, restricted to each equivalence class, $v$ must be proportional to itself under all translations. 

The only functions that are proportional to themselves under arbitrary translations are exponential functions \cite{aczel1989functional,aczel2006lectures,hewitt1965real,folland2016course}.  To recover the exponential, set $a(E) = v(x_0(E)) = v(E,0)$, then recurse. Since $v$ is proportional to itself under application of the transform, $v(E,j+1) = v(E,j) \times \gamma$ for some $\gamma$. Therefore $v(E,j) = a(E) \times \gamma^j$.
%

If $|E|$ is finite, then $\gamma$ must be chosen so that $v(E,j)$ is a periodic function of $j$. Then $|\gamma| = 1$. If $|E|$ is infinite, and $v(x)$ does not diverge for some limiting sequence of $x$, then $|\gamma|$ must also equal 1. In either case, $\gamma = \exp(i 2 \pi \omega)$ for some complex phase $\omega$. It follows that:
\begin{align}
    v(E,j) = a(E) e^{i 2 \pi \omega j}
\end{align}%
for some amplitude function $a$, and some frequency $\omega$. 

If $|\mathcal{X}|$ is finite, then every equivalence class is finite, so periodicity requires that $|E| \times \omega$ is an integer for every $|E|$. It follows that, for each equivalence class, $\omega =  k(E)/|E|$ for some integer wavenumber $k(E)$ that could vary by equivalence class. \end{proof}

\begin{remark}
\eqref{eqn: separable form} expresses each eigenfunction in a separable form. Each eigenfunction may be expressed as a product of an outer function that assigns each equivalence class a complex amplitude, and an inner function that behaves like a Fourier basis function within each equivalence class. 

Each eigenfunction $v$ has $|\mathcal{X}\setminus \sigma| + 1$ degrees of freedom: one frequency and one amplitude per equivalence class. The frequencies are fixed by the sizes of the equivalence classes, so are inherited from the symmetry. The amplitudes assigned to each equivalence, $a$,  represent the possible linear combinations of eigenvectors of $P_{\sigma}$ that share an eigenvalue. These must be computed from the entries of the matrix $M$.

In the finite case, every eigenfunction must be periodic within each equivalence class. Since the frequency is shared across all equivalence classes, the available frequencies must be compatible with $|E|$ for each $E$ such that $|a(E)| \neq 0$.
\end{remark}

So, to recover the eigenvalue decomposition of $M$, we need only recover the list of eigenvalues and amplitudes. These may be recovered by applying a series of transformations to $M$ that reduce it to a sparse matrix, then reorder its entries to convert to a block diagonal form with one block per unique eigenvalue of $P_{\sigma}$. 

Suppose that $M \in \mathbb{C}^{n \times n}$ is a diagonalizable matrix with simple eigenvalues that commutes with a permutation $\sigma$ that divides $\mathcal{X}$ into $l$ equivalence classes of sizes $\{m_j\}_{j=1}^l$. Order the indices of $\mathcal{X}$ lexicographically according to the orbit decomposition. Then $M$ is block-circulant
%
%
with $l^2$ circulant blocks, $M^{(i,j)} \in \mathbb{C}^{m_i \times m_j}$. A circulant matrix is fully specified by its first row. The rows of a $m_i \times m_j$ circulant matrix are periodic with period $p_{ij} =\text{gcd}(m_i,m_j)$ \cite{davis1979circulant,diaconis1988group}.  
So, to store $M$, we need only store the first $p_{ij}$ entries from any block. We will work with these lists directly, not the full block circulant matrix $M$. 

Let $F_{m}$ denote the discrete Fourier transform (DFT) matrix of size $m \times m$ whose columns store the $m$ discrete Fourier basis functions, $[F_m]_{j,k} = \phi(j;k,m)$ \cite{olson2014circulant,strang2012linear,van1992computational}.
%
%
Let
%
%
$F$ denote the block matrix with $l$ diagonal blocks, each equal to a DFT matrix of size $\{m_j\}_{j=1}^l$. 

Circulant matrices are diagonalized by the DFT \cite{davis1979circulant,golub2013matrix,gray2006toeplitz,uenohara1998optimal}. So, let $\hat{M} = F^{-1} M F = \bar{F} M F$. Then $\hat{M}$ has blocks:
\begin{equation}
    \hat{M}^{(i,j)} = \bar{F}_{m_i} M^{(i.j)} F_{m_j}.
\end{equation}

When $m_i = m_j$ every $i,j$ block is square and diagonal, with diagonal entries equal to the DFT of the first row of $M^{(i,j)}$ \cite{olson2014circulant}. Then $\hat{M}$ is a square block matrix with $l^2$ diagonal blocks, each of size $m \times m$. 
If $m_i \neq m_j$, then the $i,j$ block is rectangular, with $p_{ij} = \text{gcd}(m_i,m_j)$ nonzero entries. The nonzero entries do not occur along the main diagonal, but instead run along the diagonal linking the upper left-hand, and lower right-hand corners of the block. They occur at the index pairs that correspond to matched frequencies: $\{(k m_i/p_{ij} + 1, k m_j/p_{ij} + 1)\}_{k=0}^{p_{ij}-1}$ \cite{diaconis1988group,gray2006toeplitz,terras1999fourier}. 
The entries correspond to the DFT of $M^{(i,j)}_{1:p_{ij}}$ scaled by $\sqrt{m_i m_j}/p_{ij}$ \cite{terras1999fourier}. 
In either case, $\hat{M}$ is sparse, with as many nonzero entries as degrees of freedom in $M$. The more entries of $M$ are related by symmetries, the sparser $\hat{M}$. 


 \begin{figure}[t!]
    \centering
    \includegraphics[trim = 120 140 120 100, clip, width = \textwidth]{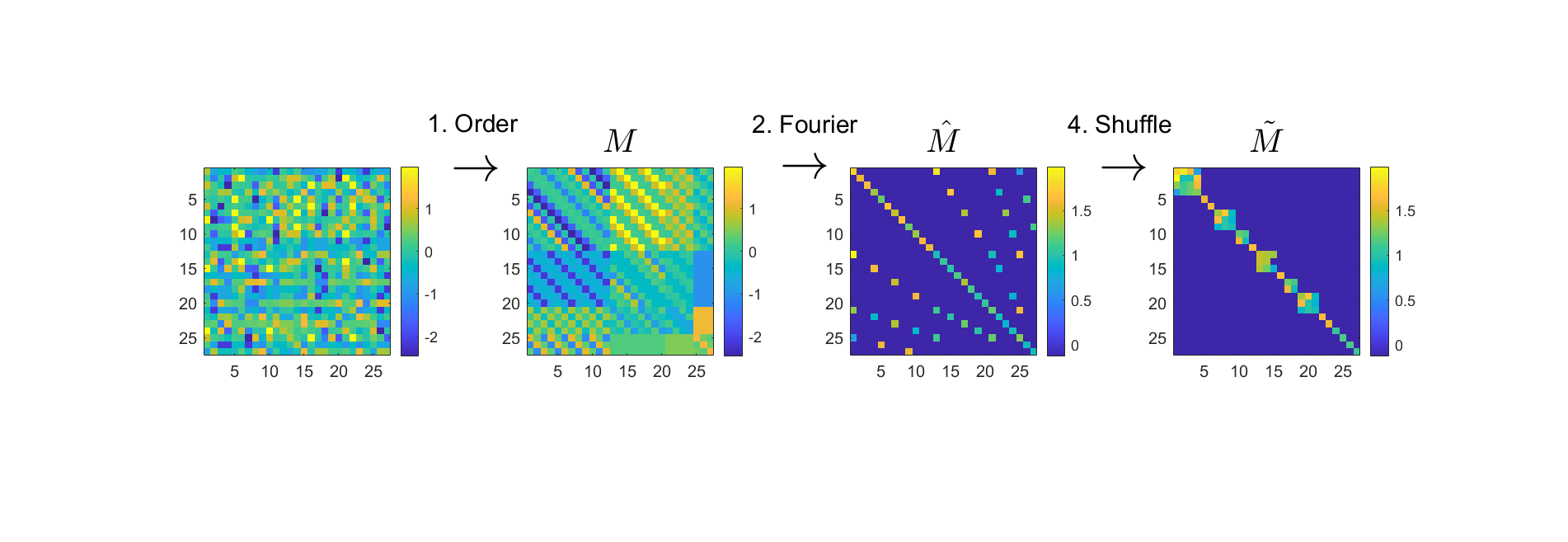}
    \caption{Stages (1.), (2.) and (4.) of Algorithm \ref{alg: variable size} applied to a sample matrix which admits one symmetry of action. The symmetry creates 7 equivalence classes of sizes 10, 8, 4, and 3. 
    \textbf{Left:} The matrix before lexicographical ordering. \textbf{Middle Left:} Ordered lexicographically by equivalence class. The blocks are all circulant. \textbf{Middle Right:} The matrix $\hat{M}$ formed by applying a block DFT to both sides of $M$ (color is absolute value). Dark blue entries are at or well below machine precision. \textbf{Right:} The matrix $\tilde{M}$ after permuting to group all rows and columns that share a nonzero off-diagonal element (color is absolute value). }
   \label{fig: one symmetry shuffled}
\end{figure}

The sparse matrix $\hat{M}$ may be converted into a block diagonal matrix by permuting its rows and columns \cite{olson2014circulant}. The appropriate permutation can be derived by matching frequencies. First, compute the list of frequencies associated with the basis functions represented by the columns of $F$. These are $\{ \{ k/m_j \}_{k=0}^{m_j - 1} \}_{j=1}^l$.
%
%
Sort the list from smallest to largest. Let $S$ denote the matching permutation. Then, group all entries that share the same frequency. Let $b_k$ denote the multiplicity of $k^{th}$ largest unique frequency. 

Let: 
\begin{equation} \tilde{M} = S^{\intercal} \hat{M} S.\end{equation}

Then $\tilde{M}$ is block-diagonal with one block per unique frequency, and with $k^{th}$ block of size $b_k \times b_k$. 


In the special case when all of the equivalence classes have the same number of entries $S$ is a perfect shuffle matrix that exchanges the inner and outer indices of the lexicographical ordering over equivalence class and wave number \cite{d2017shuffling,garsia1985shuffles}. 
Then $\tilde{M}$ is block diagonal with $m$ blocks. Each diagonal block is associated with a wave number and its rows and columns index specific equivalence class pairs.  To recover an entry, set $\tilde{M}^{(k)}_{i,j} = \text{DFT}[M^{(i,j)}_{1,:}](k)$. 

In either case, the matrix $\tilde{M}$ is block diagonal, so can be diagonalized by diagonalizing each of its blocks independently:
\begin{equation}
    \tilde{M}^{(k)} = A^{(k)} \Lambda^{(k)} {A^{(k)}}^{-1}
\end{equation}
where $A^{(k)}$ is the matrix whose columns are the eigenvectors of the $k^{th}$ block, and $\Lambda^{(k)}$ is the diagonal matrix whose diagonal entries are the corresponding eigenvalues.

This analysis suggests a three stage algorithm. Apply a DFT to the first $p_{ij}$ entries of each block of $M$ to recover the nonzero entries of $\hat{M}$. Then, permute the rows and columns of $\hat{M}$ to produce the block diagonal matrix $\tilde{M}$. The necessary permutation depends only on the sizes of the equivalence classes so can be precomputed. Finally, diagonalize each block of $\tilde{M}$. The first and third stages can be fully parallelized, once over the separate equivalence classes, then once over the separate frequency blocks. For equivalent algorithms, restricted to in-plane rotations, see \cite{hilai1994recognition,jogan2003karhunen}. For related, functional, approaches, see \cite{ponce_2011_computing,zhao2013fourier,zhao_2016_fast,vonesch_2013_design}. For equivalent algorithms, restricted to deformations of a single template, see \cite{perona1995deformable,uenohara1998optimal}.  

Algorithm \ref{alg: equal size} enumerates this process when all of the equivalence classes have the same size. For the general form, see Algorithm \ref{alg: variable size} in the appendix. Figure \ref{fig: one symmetry shuffled} illustrates the process for a randomly generated example.

\begin{algorithm}[h!]
\caption{Symmetry Exploiting Diagonalization}
\label{alg: equal size}

Given $M \in \mathbb{C}^{n \times n}$, diagonalizable, block-circulant with $l$ blocks of size $\{m_j\}_{j=1}^l = m$:

\begin{algorithmic}[1] 
\State Order the indices lexicographically with outer indices corresponding to equivalence class and inner indices corresponding to element within an equivalence class.
\State $\hat{m}^{(i,j)} \gets \text{DFT}[M^{(i,j)}_{1,:}]$. Perform in parallel.
\State Set $S$ equal to the perfect shuffle that exchanges $l$ outer and $m$ inner indices.
\State $\tilde{M}^{(k)}_{i,j} \gets \hat{m}^{(i,j)}(k)$ 
\State $(A^{(k)},\Lambda^{(k)}) \gets \text{eig}(\tilde{M}^{(k)})$ for each group $k$. Perform in parallel. 
\end{algorithmic}

\vspace{0.02 in}
Return: $(S, A, \Lambda)$ and $V =  F S A$. Then $(\Lambda, V)$ provide the eigendecomposition $M = V \Lambda V^{-1}$. 
\end{algorithm}






Algorithm \ref{alg: equal size} is significantly cheaper than direct diagonalization since it allows parallelization over both equivalence class and wavenumber while only demanding direct diagonalization of blocks of maximum size $l \times l$. The larger the equivalence classes, the smaller $l$ relative to $n$, so the faster the procedure.

\vspace{0.2 in}
\begin{remark} \label{remark: costs} Suppose that $M \in \mathbb{C}^{n \times n}$ commutes with a permutation $\sigma$ which divides $\mathcal{X}$ into $l$ equivalence classes, all of size $m$.

\begin{enumerate} [leftmargin=0.5cm]
\item \textbf{Storage:} The memory needed to store $M$, diagonalize $M$, and store its eigenvectors is $\mathcal{O}(l^2 m)$.

\item \textbf{Computation:}  The computation needed to diagonalize $M$, given $w$ workers, is $\mathcal{O}(l^2 m \log(m)/w) + \mathcal{O}(m (l - 1)^3/w)$. 
\end{enumerate}

\noindent For the costs when the blocks differ in size, see Remark \ref{remark: costs generic} in the Appendix. 
\end{remark}

We can represent Algorithm \ref{alg: equal size} as a matrix decomposition of the eigenvectors. Each factor in the decomposition represents both a stage of the algorithm and a component of the separable form provided by Theorem \ref{thm: separable eigenfunctions}. 

\begin{theorem}
    Suppose that $M \in \mathbb{C}^{n \times n}$ is a diagonalizable matrix with simple eigenvalues. 
    Let $V$ denote the $n \times n$ matrix whose columns store the eigenvectors of $M$. If $M$ commutes with a permutation $\sigma$ that divides $\mathcal{X}$ into $l$ equivalence classes of sizes $\{m_j\}_{j=1}^l$ then:
    \begin{equation}
        V = F S A
    \end{equation}
    where each factor is $n \times n$ and:
    
    \begin{enumerate}[leftmargin=0.5cm]
        \item $F$ is a block diagonal matrix, with blocks of sizes $\{m_j\}_{j=1}^l$. Each block is $F_{m_j}$ the $m_j \times m_j$ orthonormal DFT matrix. $F$ introduces the oscillatory behavior of the eigenvectors within the equivalence classes. 
        \item $S$ is a permutation matrix that sorts the frequencies in ascending order. $S$ matches complex amplitudes to wave functions. It represents the separable structure of the eigenvectors. 
        \item $A$ is a block diagonal matrix with blocks of size at most $l$. $A$ assigns complex amplitudes to the equivalence classes. 
    \end{enumerate}

    \noindent Computation is only needed to find the entries of $A$ as $F$ and $S$ are determined by the symmetries. 

    %

\end{theorem}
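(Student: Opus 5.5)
The plan is to assemble the factorization from the similarity transforms already introduced in this section, and then use simplicity of the eigenvalues to transfer the eigenvectors back to $M$. Order $\mathcal{X}$ lexicographically by the orbit decomposition, so that $M$ is block circulant as in \eqref{eqn: block circulant}. Let $F$ be the block-diagonal orthonormal DFT matrix and set $\hat{M} = \bar{F} M F$. Since each $F_{m_j}$ is unitary and symmetric, $F^{-1} = F^* = \bar{F}$. Let $S$ be the frequency-sorting permutation defined above and set $\tilde{M} = S^{\intercal}\hat{M}S$. Then
\begin{equation}
M = F S \tilde{M} S^{\intercal} \bar{F}.
\end{equation}
The first substantive step is to verify that $\tilde{M}$ is block diagonal, with one block per unique frequency $\omega$, of size $b_\omega$ equal to the number of equivalence classes $j$ with $\omega m_j \in \mathbb{Z}$. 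This bounds every block by $l$.

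To show this, I would compute $\hat{M}^{(i,j)}_{k,k'} = \langle \phi(\cdot;k,m_i), M^{(i,j)} \phi(\cdot;k',m_j)\rangle$ directly. The block $M^{(i,j)}$ intertwines the cyclic shifts on $\mathbb{C}^{m_j}$ and $\mathbb{C}^{m_i}$, because $\sigma$ shifts every class by one. Hence $M^{(i,j)}\phi(\cdot;k',m_j)$ is an eigenvector of the shift on $\mathbb{C}^{m_i}$ with eigenvalue $e^{i2\pi k'/m_j}$. It is therefore a multiple of $\phi(\cdot;k,m_i)$ with $k/m_i \equiv k'/m_j \pmod 1$, and it vanishes if no such $k$ exists. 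Consequently nonzero entries of $\hat{M}$ occur only between DFT columns carrying the same frequency. After the sort $S$ groups equal frequencies contiguously, $\tilde{M}$ is block diagonal.

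Next, diagonalize each block, $\tilde{M}^{(\omega)} = A^{(\omega)}\Lambda^{(\omega)}{A^{(\omega)}}^{-1}$. Each block is diagonalizable because it is similar to the restriction of the diagonalizable $M$ to an invariant subspace; equivalently, the restriction of a diagonalizable operator to an invariant subspace is diagonalizable. Assemble $A = \mathrm{blockdiag}(A^{(\omega)})$. Then $M = (FSA)\Lambda(FSA)^{-1}$, so the columns of $FSA$ are eigenvectors of $M$.

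Finally, I would use simplicity of the eigenvalues. Each eigenspace of $M$ is one-dimensional, so the columns of $V$ agree with those of $FSA$ up to ordering and nonzero scaling. Both the ordering and the scaling can be absorbed into $A$: permute within and between blocks only so that the block-diagonal shape is preserved, and rescale columns of $A^{(\omega)}$. This gives $V = FSA$ with the same $\Lambda$ ordering. This step also connects to Theorem~\ref{thm: separable eigenfunctions} and Corollary~\ref{cor: separable form discrete}. A column of $FSA$ supported on block $\omega$ equals $a(E)\,\phi(j;\omega m_E, m_E)$ on class $E$, with amplitudes $a$ read from $A^{(\omega)}$. This justifies the interpretations of $F$ (oscillation), $S$ (matching amplitudes to wave functions) and $A$ (amplitudes), and the remark that $F$ and $S$ depend only on the class sizes.

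The main obstacle is the bookkeeping for unequal class sizes. One must check that the sorted frequency list places exactly the DFT columns with equal frequency $k/m_j$ into a common group, with rational frequencies compared exactly, so that no nonzero entry of $\hat{M}$ falls off the block diagonal. One must also handle the ordering and normalization ambiguity, which means stating the identity $V = FSA$ modulo the canonical scaling and ordering of eigenvectors. In the equal-size case $S$ is the perfect shuffle and this step is immediate.
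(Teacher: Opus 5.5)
Your proposal is correct and follows the paper's route. The paper justifies $V = FSA$ only through the construction preceding the theorem: the block-DFT similarity $\hat{M} = \bar{F} M F$, the frequency-sorting permutation $S$ that makes $\tilde{M}$ block diagonal, and block-wise diagonalization. You supply what the paper leaves implicit. You derive the matched-frequency sparsity from the intertwining relation $M^{(i,j)} C_{m_j} = C_{m_i} M^{(i,j)}$ with the cyclic shifts $C_{m}$, check that each block is diagonalizable, and use simplicity to identify any eigenvector matrix with $FSA$.

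One wording should be fixed. Only within-block column reorderings and column rescalings can be absorbed into a block-diagonal $A$, because moving a column across frequency blocks destroys block-diagonality. So the identity holds for the frequency-grouped column order of $V$, and the theorem has to be read that way.
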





Algorithm \ref{alg: variable size} provides an efficient diagonalization procedures for block circulant matrices. It returns an eigenvalue decomposition. To return a singular value decomposition (SVD), repeat the same procedure to convert from $M$ to $\tilde{M}$, then apply an SVD instead of an eigenvalue decomposition to each frequency block:
\begin{equation}
    \tilde{M}^{(k)} = U^{(k)} \Sigma^{(k)} {W^{(k)}}^{*}.
\end{equation}
where $U^{(k)}$ and $W^{(k)}$ are orthonormal matrices, and $\Sigma^{(k)}$ is diagonal with nonnegative entries. 


\section{... of template matching matrices} \label{sec: efficient template matching}

2DTM is an exhaustive search algorithm which compares simulated 2D projections of a 3D macromolecular structure with an experimental cryo-EM image. A typical search includes 6,602 out-of-plane orientation pairs ($\phi, \theta$) plus 240 in-plane rotations ($\psi$) generated to finely sample $SO(3)$ space \cite{Yershova2010generating}.
Simulated projection images are generated in square boxes, usually $512\times512$ pixels, to model the high resolution information 2DTM depends on for accurate particle annotation \cite{lucas_locating_2021}. Stacking simulated projection images together yields a dense \textit{template matching matrix} $M$ with $n=1.6\times10^6$ rows by $d=262,144$ columns ($512^2$ pixels). We ignore modeling contrast transfer functions (CTFs) for different depth planes in this template matching matrix because practical CTFs include astigmatism fixed by the microscope geometry and therefore do not display in-plane rotational symmetry.

Towards accelerating 2DTM, we seek the singular value decomposition $M = U \Sigma W^*$, where $\Sigma$ is a diagonal matrix storing the singular values, while $U$ and $W$ are orthonormal matrices storing the singular vectors. To distinguish their roles, we will refer to the left singular vectors as hypothesis features, and the right singular vectors as template features. Every simulated projection (row of $M$) may be reconstructed via a linear combination of the template features. Cross-correlations with an image, $x$, may be recapitulated by first computing the inner product $z=(\Sigma W^*)x$ then taking weighed linear combinations of $z$ based on the rows of $U$.
We refer to $U$ as hypothesis features since each row corresponds to a different hypothesized orientation.

Computing a decomposition directly on the full matrix, which we refer to as a \emph{Cartesian SVD}, is computationally intractable given the cubic scaling costs and because the high rank features needed to accurately represent high-resolution, noise-free projections are numerically unstable. 

Representing projection images in polar coordinates introduces a symmetry of action between the in-plane rotation angle of the projection (rows) and the angular coordinate of the image (columns). This symmetry induces a block circulant structure in the template matching matrix. Therefore, Algorithm \ref{alg: equal size} can be applied to reduce the computational costs and increase the numerical stability of a SVD. Crucially, only a single in-plane rotation is needed for each out-of-plane orientation during the decomposition while still allowing analytic recovery of in-plane rotation.


We implemented this \emph{symmetry exploiting polar SVD} procedure for template matching matrices in a Python package and demonstrate its advantage over the \emph{Cartesian SVD} in terms of runtime, numeric stability, and scalability. The package leverages PyTorch for GPU-accelerated numerical operations while also enabling multi-core CPU execution. The process is illustrated in Figure \ref{fig:fig2-pipeline} and represented as a workflow in the Appendix \ref{app: decomp pipeline} (Figure \ref{fig:fig6-workflow}).

\begin{figure}[!b]
    \centering
    \includegraphics[width=\linewidth]{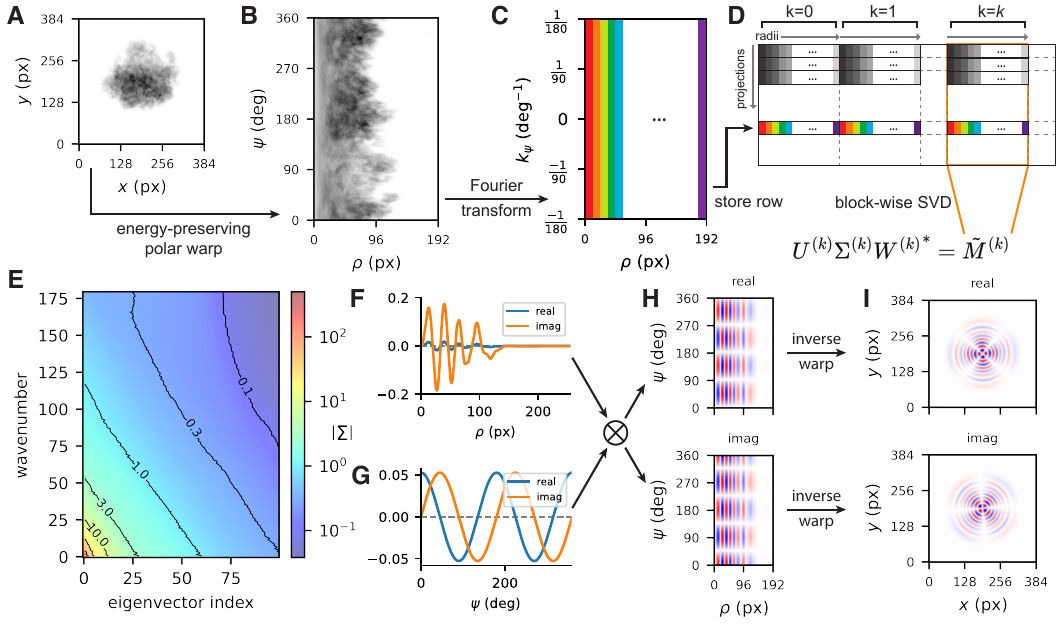}
    \caption{Computational pipeline for applying the symmetry exploiting SVD procedure to simulated projections in a template matching matrix. (A) Single projection of the large ribosomal subunit generated in Cartesian space at a particular orientation with no modeled CTF. (B) The same projection image transformed into standard polar coordinates through an energy-preserving warp. (C) Representation of the polar projection Fourier transformed along the angular dimension. Each color represents the complex amplitudes for a radial ring which are the diagonal entries of a circulant block in the matrix. (D) The full matrix $\tilde{M}$ constructed where each row corresponds to a hypothesized projection and the columns are indexed first by frequency, then radius. Each block of the matrix is decomposed to obtain per-frequency singular values and vectors. (E) Magnitude of extracted singular values indexed by wavenumber and eigenvector index. (F) An eigenvector for a particular angular frequency which corresponds to a complex-valued radial amplitude function. (G) The Fourier mode associated with the eigenvector in F. (H) Taking the outer product of the eigenvector and Fourier mode generates a polar representation of an extracted template feature. (I) Applying the inverse polar warp transformation generates the same template feature in cartesian coordinates. Real and imaginary components are treated independently. }
    \label{fig:fig2-pipeline}
\end{figure}

The \emph{symmetry exploiting SVD} procedure operates in two distinct stages (Figure~\ref{fig:fig2-pipeline}A-C) which are both trivially parallelizable. The first stage simulates projections for all sampled out-of-plane orientations via the Fourier slice theorem, optionally convolves the projections with a user-defined Fourier space filter, and warps the projections from Cartesian to polar coordinates using bi-quintic interpolation (Figure~\ref{fig:fig2-pipeline}B).
A Jacobian correction term based on element area is also applied during the warp transform to maintain a consistent quadrature approximation to the underlying integral operator. We adopt a custom polar coordinate system that preserves the desired rotational symmetries, while achieving asymptotically uniform point density far from the origin, and while controlling the maximum distance between any point in the Cartesian grid and its nearest neighbor in the polar mesh. For details, see Appendix \ref{app: spiral}. The Jacobian correction term and custom mesh reduce the information lost when changing coordinates, and resolve some of the smoothing artifacts observed in \cite{uenohara1998optimal,hilai1994recognition}. Zhao and Singer have proposed a functional approach that does not require conversion onto a new mesh by first by regressing the image values on the Cartesian grid onto the Fourier-Bessel basis (the eigenfunctions of the Laplacian on a disc) which is also a steerable basis  \cite{zhao2013fourier,zhao_2016_fast}.

Once converted into a polar representation, a fast Fourier transform (FFT) is applied along the angular dimension of each projection to obtain the eigenvalues of each circulant block without materializing all in-plane rotations (Figure~\ref{fig:fig2-pipeline}C). These eigenvalues are stored in a intermediary data matrix corresponding to $\tilde{M}$ (Figure~\ref{fig:fig2-pipeline}D). We note the storage step orders columns first by angular frequency, then by radius to emulate the application of the shuffle matrix at no additional cost.

The second stage of the procedure computes the per-angular frequency SVD by grouping the columns of $\tilde{M}$ into blocks indexed by frequency, $\tilde{M}^{(k)}$. These are passed through a Jacobi SVD calculation to compute the complex amplitudes ${W^{(k)}}^*$, singular values $\Sigma^{(k)}$, and associated reconstruction weights $U^{(k)}$. The computation is parallelized across frequency blocks. The singular values are indexed by wave number, $k$, and eigenvector index (Figure~\ref{fig:fig2-pipeline}E). The right singular vector for any singular value is a complex-valued radial amplitude function that assigns a complex amplitude to each circle centered at the origin (Figure~\ref{fig:fig2-pipeline}F). 

To recover the corresponding singular vector of $M$, we compute an outer product between the analytic Fourier mode (Figure~\ref{fig:fig2-pipeline}G) with the radial amplitude function. This produces a complex-valued, polar-space template feature (Figure~\ref{fig:fig2-pipeline}H) in the separable form specified by \eqref{eqn: separable form finite case}. Applying the inverse of the warp transformation from the first stage yields a steerable Cartesian template feature (Figure~\ref{fig:fig2-pipeline}I). These template features are steerable since the response of the template to a rotated copy of the image can be computed directly from the un-rotated response by a complex phase shift.

For real-valued projection image data, complex eigenvalues of each circulant block will have conjugate symmetry between positive and negative angular frequencies. By using the real-valued FFT, we can half the intermediary data storage requirements, half the number of SVD subroutine calls, and half storage costs with no information loss.

\begin{figure}[b!]
    \centering
    \includegraphics[trim = 0 0 0 0, clip, width=0.94\linewidth]{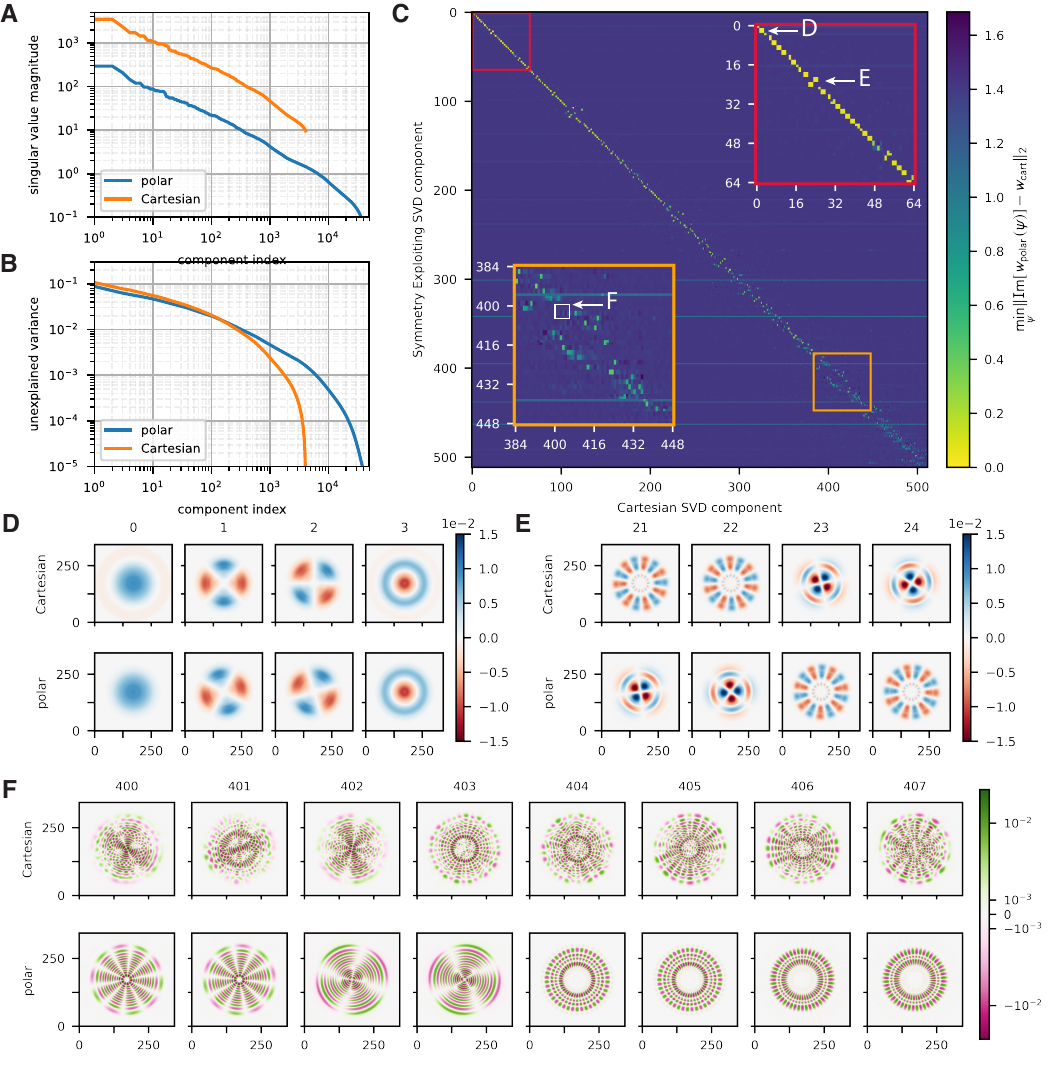}
    \caption{Comparison of singular value and template feature structure from a Cartesian SVD and symmetry exploiting polar SVD procedure applied to the same underlying template matching matrix. (A) Decay of extracted singular values between methods which show similar low rank structure, up to a scaling factor, but diverge at higher components ($r>500$). The Cartesian SVD was only run up to $r=4096$ components due to computational constraints. (B) Comparison of cumulative unexplained variance between both SVD methods. (C) Distance matrix between template features (right singular vectors) for the top 512 singular values. In-plane rotations were applied analytically to polar template features to find the rotation which minimized the $\ell_2$ distance to the Cartesian template features; a consistent rotation was applied across each row. Because \emph{Cartesian SVD} produces real-valued features and because the real and imaginary arguments of an template feature from the \emph{symmetry exploiting polar SVD} for $k_\psi \neq 0$ are a quarter wavelength offset from each other, we compare and display only the complex argument of the polar SVD feature scaled by $\sqrt{2}$ to preserve an image norm of $1$. (D) The top 4 template features display nearly exact registration, up to an in-plane rotation. (E) Example where the same template features show registration but their ordering is swapped. (F) Cases where template feature registration breaks down between the two methods, likely due to numerical instability of the massive Cartesian SVD. Images in F shown on a symmetric log scale for visualization purposes.}
    \label{fig:fig3-comparison}
\end{figure}

We established that the symmetry exploiting decomposition procedure produces consistent results with the \emph{Cartesian SVD} by comparing features extracted from the same underlying template matching matrix for the Large Ribosomal Subunit (LSU). Simulated projection images were generated from the PDB structure 6Q8Y \cite{tesina_structure_2019} at $0.936\text{\AA}/\text{px}$ in $512\times512$ pixel images and with a B-factor scaling of $0.5$ applied to per-atom B-factors deposited in the PDB. Orientations were sampled at 3.5 degree out-of-plane orientation increments and uniformly spaced 2.5 degree in-plane rotation increments using the package \texttt{torch\_so3}, and a cumulative dose filter of $50~ e^{-}/\AA^2$ was applied \cite{grant_2015_measuring}. Templates were cropped to the central $344\times344$ pixels to produce a template matching matrix with $n=485,856$ rows and $d=118,336$ columns. This matrix has one-third as many rows and one-half as many columns as a full-scale template matching matrix. We compared methods on a reduced matrix since the full-scale matrix is too large for the \textit{Cartesian SVD} routine given available computing resources (32 core Intel Xeon Gold 5218 CPU, 1.5TB memory).

The truncated \emph{Cartesian SVD} --- calculated using dask, a parallel computing library in Python \cite{rocklin2015dask} --- took 153 minutes and consumed nearly all memory available to the server to return $r=4096$ features. The \emph{symmetry exploiting polar SVD} procedure, when run on the same CPU-only hardware, was \textbf{9.15} times faster. It required only \textbf{16.3} minutes to obtain over \textbf{22.5} times as many ($> 9\times10^5$) features, so ran \textbf{205} times faster per feature.

The two methods produce a similar sequence of singular values, up to a scaling constant (Figure~\ref{fig:fig3-comparison}A). The scaling reflects the fact that the polar mesh uses more nodes than the Cartesian mesh. The resulting matrix is larger, so distributes some energy to ranks that exceed the size of the Cartesian template matching matrix. Nevertheless, the relative $\ell_2$ error in the reconstructed template matching matrices (cumulative unexplained variance), decay, as a function of rank, at similar rates, until the rank of the approximations approach the dimensions of the matrices (Figure~\ref{fig:fig3-comparison}B). 

The low rank template features extracted by the two methods are similar, though may differ by an arbitrary in-plane rotation (Figure~\ref{fig:fig3-comparison}D), and may differ in order when their singular values are near-equal in magnitude (Figure~\ref{fig:fig3-comparison}E). 

High-rank features diverge as numerical errors degrade the quality of the \emph{Cartesian SVD}, but not the \emph{polar SVD}, and where discrepancies in the polar and Cartesian meshes provide different quadrature approximations to the underlying integral operation (Figure~\ref{fig:fig3-comparison}CF). 
The symmetry exploiting SVD resolves the template features more stably than direct decomposition for two reasons. First, the Fourier modes are resolved analytically, so are not susceptible to numerical error. 
Second, the amplitude functions are recovered via a sequence of $m$ singular value decompositions, one per wavenumber. Since each individual decomposition is smaller than the full decomposition, the associated amplitude function can be computed more stably than the matching singular vector of the full matrix. 

\begin{figure}[t!]
    \centering
    \includegraphics[width=1\linewidth]{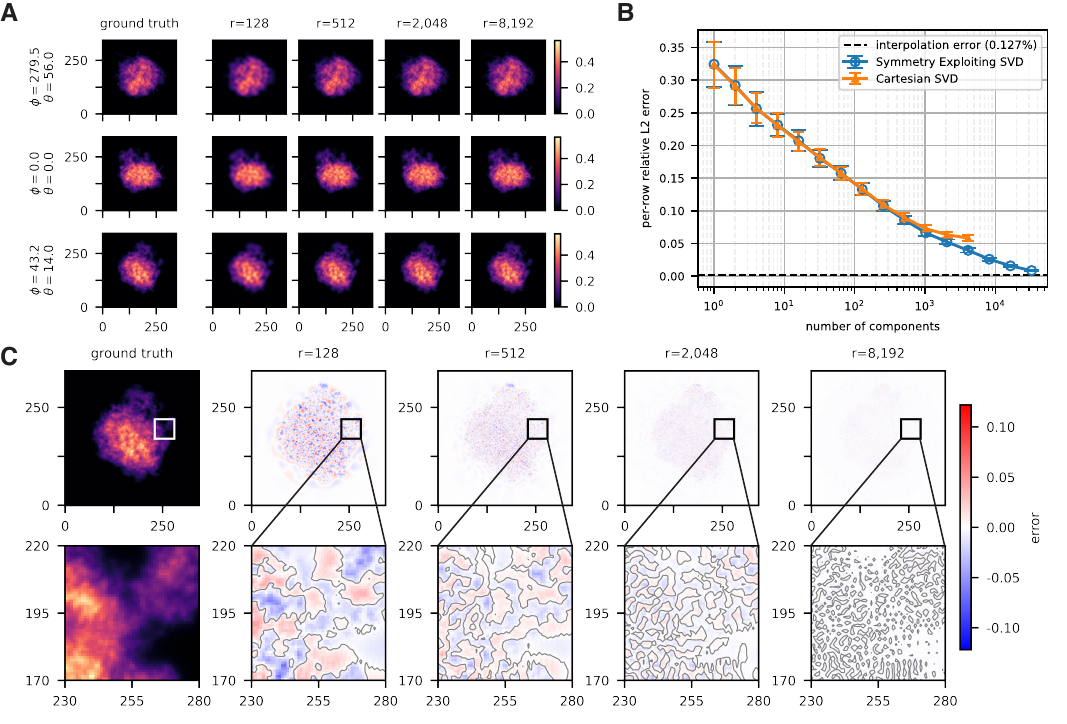}
    \caption{Fidelity of projection reconstruction from the symmetry exploiting SVD procedure for different numbers of components. (A) Reconstructions of three ground-truth projections (rows) using an increasing numbers of components. Out-of-plane orientation angles shown on the left. (B) Comparison of the per-row relative $\ell_2$ reconstruction error between the Cartesian and symmetry exploiting Polar SVD. Polar representations begin outperforming the Cartesian representations past $r\approx1000$. Dashed line is averaged round-trip interpolation error from Polar coordinate transformations applied to the same ground truth projections. (C) Difference between reconstruction ground-truth projection for a single orientation. Top row displays the full projection reconstruction error while the second row displays the same reconstruction error in a zoomed-in region. Contour lines in the second row indicate the zero-crossing in reconstruction error}
    \label{fig:fig4-reconstruction}
\end{figure}

To demonstrate that the features recovered by the \emph{symmetry exploiting polar SVD procedure} provide a valid decomposition, we reconstructed projection images with an increasing number of components. The reconstruction tends towards the ground truth as the number of included components, $r$, increases (Figure~\ref{fig:fig4-reconstruction}A). Prominent visual features are resolved by $r \approx 1000$ with finer details filled in thereafter. We computed the per-row reconstruction error compared to the ground-truth projections for both decomposition methods. Distributions of reconstruction errors are consistent between both methods until $r \approx1000$ where representations from the symmetry exploiting SVD begin to outperform Cartesian SVD representations (Figure~\ref{fig:fig4-reconstruction}B). We note that the reconstruction error for the symmetry exploiting SVD trends towards the round-trip interpolation error for the Cartesian-polar coordinate transformation (0.13\%). The Cartesian SVD reconstructions never surpasses an average reconstruction error of 5\%.

At high component indices, the reconstruction error is consistent across all rows, indicating that we reconstruct each individual projection accurately in addition to the matrix as a whole. Reconstruction errors also shift from low-resolution features in low rank approximations to high-resolution details as more components are included (Figure~\ref{fig:fig4-reconstruction}C).

\begin{figure}[t!]
    \centering
    \includegraphics[width=1\linewidth]{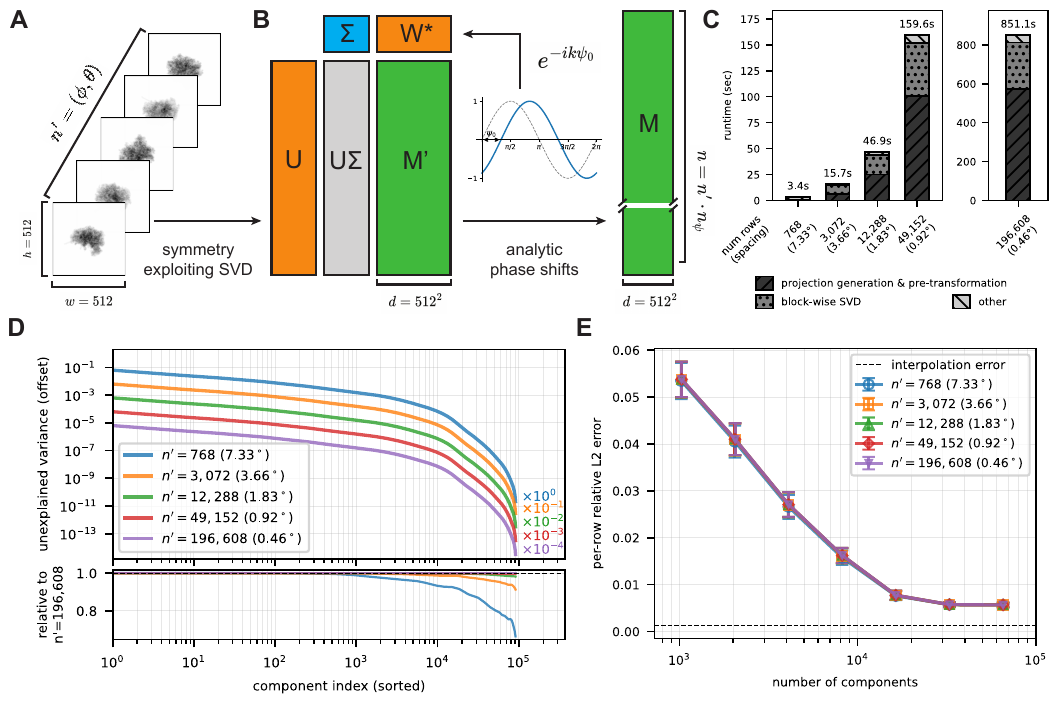}
    \caption{Decomposition of template matching matrices across orientation sampling regimes for the same macromolecular structure. (A) Projections for each sampled out-of-plane orientation are fed into the symmetry exploiting SVD pipeline. $n'$ refers to the number of orientations on the HEALPix grid. (B) Method for analytically recovering in-plane rotations of projections by applying a complex phase shift to the extracted template features which enables recovery of arbitrary in-plane rotation. Assuming discretely sampled in-plane rotations at the same spacing as the largest HEALPix grid ($n_\psi = 782$), then matrix $M$ represents 153 million unique projections. (C) Runtime breakdown of the decomposition pipeline between the projection simulation and pre-transformation stage (Figure~\ref{fig:fig2-pipeline}A-C) and the block-wise decomposition stage (Figure~\ref{fig:fig2-pipeline}D) run on an RTX PRO 6000 Workstation Edition GPU. (D) Cumulative unexplained variance ratios from each independent decomposition. (top) Unexplained variance curves show identical decay. Curves offset for clarity. (bottom) Ratio between the unexplained variance curves, relative to the largest $n'$ result. (E) Reconstruction errors for 768 orientations similar across each sampling regime indicate that for the same number of components each decomposition can reconstruct the ground-truth data to the same fidelity.}
    \label{fig:fig5-fullscale}
\end{figure}

After verifying the symmetry \emph{symmetry exploiting SVD} provides a valid decomposition, we tested how the decomposition pipeline scales with number of hypotheses (rows) in the template matching matrix. We used the HEALPix grid to sample five progressively finer grids on $S^2$ spanning an angular spacing of $7.325^\circ$ to $0.458^\circ$ and between $n'=768$ and $n'=196,608$ out-of-plane orientations, respectively \cite{zonca2019healpy, gorski2005healpix}. For a particle the size of the LSU, this approximates sampling features to a resolution of $\sim$40--2~Å \cite{crowther_reconstruction_1970}. The HEALPix grid is an ideal choice for progressive sampling since each point has equal area and therefore gives uniform quadrature weight to each orientation hypothesis.

Projections of the same LSU structure were generated at varying numbers of out-of-plane orientations in $512\times512$ pixel images again at $0.936\text{\AA}/\text{px}$ and passed through the symmetry exploiting SVD pipeline (Figure~\ref{fig:fig5-fullscale}A). Without directly modeling in-plane rotation, the symmetry class can be recovered by applying an analytic phase shift to the Fourier mode of each template feature (Figure~\ref{fig:fig5-fullscale}B). For a rotation angle $\psi$ and angular mode $k$, this is equivalent to a complex multiplication by $e^{-ik\psi}$ applied to hypothesis weights from the $U$ matrix or, by linearity, the template features themselves. If in-plane rotations are discretely sampled at the same angular spacing rate of the finest HEALPix grid ($\Delta \psi=0.46^\circ$), then our decomposition provides a compressed representation for a matrix with $n=153\times10^6$ rows.

Runtimes for the decomposition pipeline scale approximately linearly with problem size when run on a GPU. The smallest decomposition (768 orientations) ran in 3.4 s on a workstation equipped with an RTX PRO 6000 GPU while the largest decomposition (196,608 orientations) took only 14.2 minutes on the same hardware (Figure~\ref{fig:fig5-fullscale}C). Though the latter exceeds current 2DTM out-of-plane orientation sampling by two orders of magnitude, decomposing a template matching matrix spanning hundreds-of-thousands of rows remains tractable. The ability to decompose a template matching matrix representing sufficient orientational sampling to represent all possible projections of a particle given the Nyquist resolution of the dataset could reduce false negatives that result from missing views in the template matching matrix. Sampling along additional axes, such as molecular conformation, and projecting all hypotheses into the same space using the \emph{symmetry exploiting SVD} is also feasible on workstation-level hardware.

The cumulative unexplained variance ratios from each independently computed decomposition decay at a consistent rate across sampling densities (Figure~\ref{fig:fig5-fullscale}D). The unexplained variance of the two coarsest grids ($n'=768$ and $n'=3072$) begin to deviate from the finest grid near component indices $10^3$ and $10^4$, respectively, where the number of orientation samples begin to bound the achievable matrix rank. Recovered template features are consistent only at low wavenumbers and eigenvector indices for the coarsest sampled grid. As more orientations are sampled for the decomposition, the template features converge for higher wavenumbers and eigenvector indices (Figure~\ref{fig:si-feature-similarity}). Despite extracting different underlying template features, all five decompositions perform equally well at reconstructing projections at the 768 similar orientations shared between all grids (Figure~\ref{fig:fig5-fullscale}E).

We were able to compute the symmetry exploiting SVD for a typical number of out-of-plane orientations ($n'=3072$) for a template matching matrix in \textbf{15.7} seconds, including the projection generation and pre-transformation stage, on an RTX PRO 6000 Workstation Edition GPU 
(Figure~\ref{fig:fig5-fullscale}C). For comparison, computing the projections by enumerating 240 in-plane rotations for each out-of-plane orientation to construct a template matching matrix takes approximately 7 minutes on the same GPU hardware. So, by exploiting symmetry, decomposing a full-scale template matching matrix takes \textbf{25} times less time than computing its entries.

When approximating the action of a template matching matrix, an inverse Fourier transform can be used to recover cross-correlation values along in-plane rotations more efficiently than rote enumeration. Each wavenumber, $k$, of a template feature contributes a term $e^{-ik\psi}$ to the associated hypothesis weight at angle $\psi$. Summing these contributions across features grouped by wavenumber produces a complex-valued frequency spectrum, ${\hat{C}}$, where
\begin{equation}
    {\hat{C}}^{(k)} = M'^{(k)} x = U^{(k)} \Sigma^{(k)} ({W^*}^{(k)}x)
\end{equation}
is the $k^\text{th}$ angular frequency. The in-plane rotation cross-correlation may then be recovered by a single real-valued inverse Fourier transform, $C_\psi = \text{IRFFT}_k\left( {\hat{C}}\right)$. This mirrors prior strategies for fast rotational image alignment, where the best matching rotation is recovered from a 1D Fourier transform over a polar representation of an image \cite{de-castro_1987_registration, reddy_1996_an}.

\section{... for compressed search.} \label{sec: discussion}

Many problems in biology seek to understand how cellular physiology and disease are mediated by their constituent molecular components. Cryo-EM produces atomic resolution views of cells but the potential to visualize the molecular structure and interactions of proteins within their native cellular environment is limited by the difficulty in interpreting these images. Template matching has been successfully applied to annotate specific macromolecular structures within cellular cryo-EM images but cannot be scaled to the proteome because of the extreme computational demands.



The procedure described in this manuscript allows the decomposition of full-scale template matching matrices that were too large to decompose using standard linear algebra packages, even with large-scale computing resources. Leveraging the in-plane rotational symmetry dramatically reduces the computational costs of decomposition, allowing decomposition of template matching matrices in less time than their construction and enabling the decomposition of template matching matrices with finely sampled angular coordinates. 
Symmetries of cryo-EM images related by an in-plane rotation have similarly motivated steerable PCA algorithms built around Fourier-Bessel representations \cite{ponce_2011_computing, zhao_2014_rotationally, zhao_2016_fast}. These algorithms are used to project experimental data onto a low-dimension, orthogonal basis where the primary interest is the covariance structure of the data and where preserving noise statistics is important \cite{marshall_2023_fast}. Our work differs in that we seek a linear decomposition of a deterministic template matching matrix towards approximating the action of the matrix on an image up to a desired accuracy. This objective does not depend on the scale of an experimental dataset or require preservation of noise in the reference. To exploit the rotational symmetry of the template matching matrix during decomposition, a polar representation of projections is sufficient as long as the polar-Cartesian interpolation error is controlled. This removes extra machinery associated with the non-uniform FFT and Bessel-quadrature stages needed for constructing a Fourier-Bessel image representation.


Like any fast numerical routine, the existence of an efficient procedure expands the scope of problems we could hope to solve via decomposition beyond the current template matching pipeline.

\begin{itemize}
    \item \textbf{Compress More Hypotheses:} We could expand the number of hypotheses represented in a single decomposition by adding more projections to the template matching matrix. These could correspond to modeling conformational flexibility of dynamic target molecules or including multiple distinct target structures in the same decomposition.

    In every case, the required decomposition would provide a low rank approximation to the full template matching matrix by projecting its rows onto a series of orthonormal template features. These will always adopt the separable form identified in Section \ref{sec: Fast Diagonalization}. In particular, every template feature will equal an outer product between a angular Fourier mode and a radial amplitude function. 
    
    Recall that, the amplitude functions associated with wavenumber $k$ are the right singular vectors of the matrices $\tilde{M}^{(k)}$. Therefore, the amplitude functions may be recovered from the eigenvectors of $B^{(k)} = \left(\tilde{M}^{(k)}\right)^{*} \tilde{M}^{(k)}$ \cite{strang2012linear}. So, given $h$ continuous hypothesis coordinates, $\{z_j\}_{j=1}^h$, the necessary amplitude functions, associated with wavenumber $k$, are the eigenfunctions of the integral operator with kernel:
    \begin{equation} \label{eqn: integral operator}
        B^{(k)}(\rho,\rho') = \iiint \hat{M}(\rho,k;z)^* \hat{M}(\rho',k;z) dz_1 dz_2 ... dz_h
    \end{equation}
    where $M(\rho,\psi;z)$ is the intensity of the projection corresponding to hypothesis $z$, at $(\rho, \psi)$ in the imaging plane, and where $\hat{M}(\rho,k;z)$ is the Fourier transform of $M(\rho,\psi;z)$ with respect to $\psi$ at wavenumber $k$. This is the approach used to find steerable expansions of individual filters in \cite{perona1995deformable,uenohara1998optimal}. To append additional hypothesis coordinates, simply marginalize over an additional $z_j$. If a hypothesis coordinate is discrete, use a sum instead of an integral. 

    Since the coefficients needed to represent any particular hypothesis can be recovered from a known set of template features by projection onto the template features, we can separate the problem of resolving the template features, $W$, from the hypothesis features, $U$, in the SVD of $M$. When computing $W$, the hypothesis coordinates need only be sampled finely enough to provide accurate quadrature approximations to the integrals in \eqref{eqn: integral operator}. Then, we can recover the entries of $U$ for any new projection by computing its product with $W$. Thus, when the entries of $M$ depend smoothly on a hypothesis coordinate, we may be able to compute the features using a coarse discretization of the hypothesis space, then query the features during a template matching search at a fine discretization. 


    \item \textbf{Multi-Precision Template Matching:} The availability of singular value decompositions of full-scale template matching matrices allows the reconstruction of cross-correlograms with a Pareto optimal trade-off between precision and computational effort. This flexibility will enable multi-precision search strategies that use validated numerics \cite{tucker2011validated,moore2009introduction,nedialkov2004interval} to perform the same statistical search used in 2DTM, while only computing each cross-correlogram entry to its minimal necessary precision. In principle such a strategy has the potential for faster, higher throughput template matching by reducing the overall computational cost. 
    

\end{itemize}

Because the fast diagonalization algorithm (Algorithm \ref{alg: variable size}) applies to any matrix that exhibits a symmetry of action, we expect that the advantages demonstrated in this paper, in the context of template matching, will apply in other contexts that exhibit such a symmetry. 


\section*{Datasets and Code}

Implementations and results discussed in this manuscript for the decomposition of template matching matrices can be found in a Python package located at \href{https://github.com/Lucaslab-Berkeley/Panther-EM}{github.com/Lucaslab-Berkeley/Panther-EM}, specifically \texttt{v0.0.1-alpha}. Analysis scripts and input data available upon request.

\section*{Acknowledgements}

The authors thank Gilbert Strang for his suggested references. They thank Niko Grigorieff and Josh Dickerson for their critical reading and insightful feedback on preliminary manuscripts and Tim Grant for initial conversations about 2DTM using low rank approximations. 

\section*{Funding}
This work was made possible by funding from an NIH Director’s New Innovator Award (DP2GM159184). BAL is a Searle Scholar and a Shurl and Kay Curci Scholar.

\begingroup
\small
\bibliographystyle{pnas-new}
\bibliography{Refs}  
\endgroup

\newpage
\subsection*{Supporting Information Appendix (SI)}

\subsection{Optimality of the Singular Value Decomposition} \label{app: optimality of SVD}

Let $N \in \mathbb{C}^{m \times n}$ denote an un-whitened template matching matrix. A matched filter search begins by computing a matrix vector product of the form $N x$ where $x$ is a vectorized image. We pursue a rank $r$ approximation to $N$, $N \approx N^{(r)}$, to approximate products against $N$ efficiently. Therefore, we pursue a low rank approximation such that the approximation $N^{(r)} x \approx N x$ is accurate for typical images $x$.

Suppose that $x$ is drawn from an ensemble of images with covariance, $\text{Cov}[x] = C$. To whiten the problem, replace $x$ with $y = C^{-1/2} x$. Then $\text{Cov}[y] = C^{-1/2} C C^{-1/2} = I$. Then, to work with $y$, we should replace $N$ with $M = N C^{1/2}$ so that $M y = N x$. Let $M^{(r)} = N^{(r)} C^{-1/2}$ denote the matching rank $r$ approximation to the whitened template matching matrix. 

Let $E = M^{(r)} - M$ denote the error in the rank $r$ approximation to the whitened template matching matrix. The error in the approximate product is $e = (M^{(r)} - M) y = E y$. Then:
\begin{equation}
    \mathbb{E}[\|e\|^2] = \mathbb{E}[y^{\intercal} E^{\intercal} E y] =  \langle E^{\intercal} E, \text{Cov}[y] \rangle = \langle E^{\intercal} E, I \rangle
\end{equation}
where $\langle A, B \rangle = \sum_{i,j} A_{i,j} B_{i,j}$ is the matrix-inner product. Then:
\begin{equation}
    \mathbb{E}[\|e\|^2] = \sum_{i=1}^n \sum_{j=1}^n [E^{\intercal} E]_{i,j} \delta_{i,j} = \sum_{i=1}^n [E^{\intercal} E]_{i,i} = \sum_{i=1}^n \sum_{j=1}^n E_{i,j}^2 = \|E\|_{\text{Fro}}^2
\end{equation}
where $\|A\|_{\text{Fro}}$ denotes the Frobenius norm.

Let $M = U \Sigma W^*$ denote the SVD of $M$. By Eckart-Mirsky-Young \cite{eckart1936approximation,mirsky1960symmetric}, the low rank approximation produced by truncating the SVD at $r$ components:
\begin{equation}
    M \approx M^{(r)} = \sum_{i=1}^r \sigma_r U_{r,:} W_{r,:}^*
\end{equation}
minimizes $\|E\| = \|M^{(r)} - M\|$, for any unitarily invariant matrix norms, among all rank $r$ matrices $M^{(r)}$. The Frobenius norm is a unitarily invariant matrix norm, so the truncated SVD minimizes $\|E\|_{\text{Fro}}^2$ among all rank $r$ approximations to $M$. 

Therefore, the truncated SVD to the whitened template matrix provides the best rank $r$ approximation to $M$ in the sense that it minimizes the expected square error in the products used in the first stage of a matched filter search.

\subsection{Fast Diagonalization Algorithm for Generic Symmetries} \label{app: general algorithm}

If $M$ is a square matrix that commutes with a generic permutation, then it may be diagonalized efficiently using the procedure outlined in Section \ref{sec: Fast Diagonalization}. That procedure is enumerated as an algorithm in Algorithm \ref{alg: variable size}.

\begin{algorithm}
\caption{Symmetry Exploiting Diagonalization (Generic)}
\label{alg: variable size}

Given $M \in \mathbb{C}^{n \times n}$, diagonalizable, block-circulant with $l$ blocks of size $\{m_j\}_{j=1}^l$:

\begin{algorithmic}[1] 
\State Order the indices lexicographically with outer indices corresponding to equivalence class, $E(x)$, and inner indices corresponding to position within an equivalence class, $j(x)$.
\State $\hat{M}^{(i,j)} \gets \bar{F}_{m_i} M^{(i,j)} F_{m_j}$ performed implicitly, in parallel, by applying the DFT to the first $p_{ij} = \text{gcd}(m_i,m_j)$ entries of the first row of $M^{(i,j)}$
\State List all frequencies, then sort in increasing order. Set $S$ equal to the corresponding permutation matrix.
\State  Identify each unique frequency in the list then let $\tilde{M}^{(k)}$ denote all entries of $\hat{M}$ whose row and column indices correspond to the $k^{th}$ unique frequency.
\State $(A^{(k)},\Lambda^{(k)}) \gets \text{eig}(\tilde{M}^{(k)})$ for each group $k$. Perform in parallel. 
\end{algorithmic}

\vspace{0.02 in}
Return: $(S, A, \Lambda)$ and $V =  F S A$. Then $(\Lambda, V)$ provide the eigendecomposition $M = V \Lambda V^{-1}$. 
\end{algorithm}

\begin{remark} \label{remark: costs generic} Suppose that $M \in \mathbb{C}^{n \times n}$ commutes with a permutation $\sigma$ which divides $\mathcal{X}$ into $l$ equivalence classes of size $\{m_j\}_{j=1}^l$. Let $p_{ij} = \text{gcd}(m_i,m_j)$. 

\begin{enumerate} [leftmargin=0.5cm]
    \item \textbf{Storage:} The memory needed to store $M$, diagonalize $M$, and store its eigenvectors is: $\mathcal{O}(\sum_{i,j=1}^l p_{ij})$

   \item \textbf{Computation:}  The computation needed to diagonalize $M$ is $\mathcal{O}(\sum_{i,j=1}^l p_{ij} \log(p_{ij})/w) + \mathcal{O}((n/w) \log(n)) + \mathcal{O}(\sum_{k} (b_k - 1)^3/w)$ where $b_k$ is the size of the $k^{th}$ block of $\tilde{M}$ produced after reordering $\hat{M}$. The $\mathcal{O}((n/w) \log(n))$ term represents the cost to find frequency groups. This cost may be distributed across repeated decompositions of different matrices with the same symmetries. 
 
\end{enumerate}
\end{remark}

\newpage
\subsection{Decomposition Pipeline for Template Matching Matrices} \label{app: decomp pipeline}

\begin{figure}[h!]
    \centering
    \includegraphics[width=0.38\linewidth]{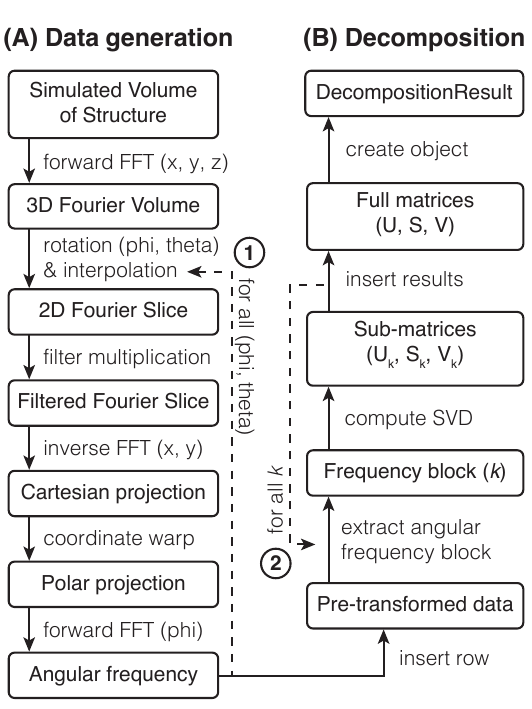}
    \caption{Discrete steps along the decomposition pipeline. (A) Projection images are generated on-the-fly via the Fourier slice theorem. Cartesian projections are warped into polar coordinates using bi-quintic interpolation and pre-transformed into an angular frequency representation. All out-of-plane angles $(\phi, \theta)$ are processed before the decomposition stage. (B) Each angular frequency block is processed individually to create a full SVD representation of the data. Results are packaged into a helper \texttt{DecompositionResult} Python object. Dashed lines are trivially parallelizable stages in the pipeline.}
    \label{fig:fig6-workflow}
\end{figure}

\newpage
\subsection{Spiral polar coordinate system} \label{app: spiral}

The standard, discretized polar coordinate system has a uniform number of radial nodes ($n_\rho$) equally spaced from zero radius to the maximum considered radius ($\rho_\text{max}$) and each containing the same number of uniformly spaced angular samples ($n_\psi$). Adopting $t$ as a parameterization variable, the standard polar coordinate system for coordinates $(\rho, \psi)$ is defined as
\begin{equation}
    \rho(t)=\rho_\text{max}t, \quad \psi(a) = \frac{2\pi}{n_\psi}a,
\end{equation}
with $t=i/(n_\rho -1)\in[0, 1]$, discretized radial index $i=0,\dots,n_\rho-1$, and angular index $a=0,\dots,n_\psi-1$.

The spacing between rays in the polar grid increases with radius. As a result, the point density is non-uniform, and, Cartesian grid points far from the origin may be far from the nearest grid point in the polar grid. These features produce large interpolation errors when the original image contains high frequency angular information at large radii since interpolation onto the polar grid, then back to the Cartesian grid, smooths angularly where the polar grid is less dense. Custom polar coordinate systems that use irregularly spaced rings were also proposed in \cite{hilai1994recognition}.

We leveraged a polar-like coordinate system where spacing between subsequent radial nodes grows with the square root of distance and where each ring has a radius-dependent angular offset. This coordinate system still obeys rotational symmetry in each radial node since it is built from $n_{\psi}$ copies of a single spiral arm extending from the origin, rotated incrementally by $\Delta \psi$. Hence, we refer to this coordinate system as spiral polar coordinates.

Unlike the standard polar grid, a spiral grid can achieve asymptotically uniform point density as a function of $\rho$ by reducing the space between each ring of grid points as the radius increases. By choosing the spacing to decay at rate $\mathcal{O}(\rho^{-1})$, we ensure that the area of each triangle in a triangulation of the grid approaches a constant as $\rho$ increases. By offsetting the point placement in adjacent rings, we ensure that the number of spiral grid points per cell of the original Cartesian grid is approximately constant at large radii. 

Using the same indexing as above, the spiral polar coordinate system is defined as
\begin{equation}
    \rho_\text{spiral}(t) = \rho_\text{max}\left(\sqrt{c^2 + t(1+2c)} - c\right), \quad
    \psi_\text{spiral}(a,t) = \frac{2\pi}{n_\psi}\left(a + p_0\, n_{\psi}\, t\right),
\end{equation}
where $c\geq0$ controls where the grid changes from uniform to square root radial spacing and where $p_0$, in radians, controls the percent arc offset between subsequent radial nodes. 

The square root growth rate is chosen to achieve asymptotically uniform point density, while the offset $c$ ensures that the radial spacing is approximately linear near the origin. Increasing $c$ increases the number of rings near the origin. The parameter $c$ can be chosen to bound the worst case distance between a point inside the disc and any point in the mesh. Controlling both the worst case distance (maximum radius of any triangle in the triangulation) and the areas of the triangles, ensures a well-conditioned mesh with an approximately equilateral triangulation. This minimizes interpolation artifacts and reduces smoothing when moving to and from a Cartesian grid.

\begin{figure}[h!]
    \centering
    \includegraphics[width=0.7\linewidth]{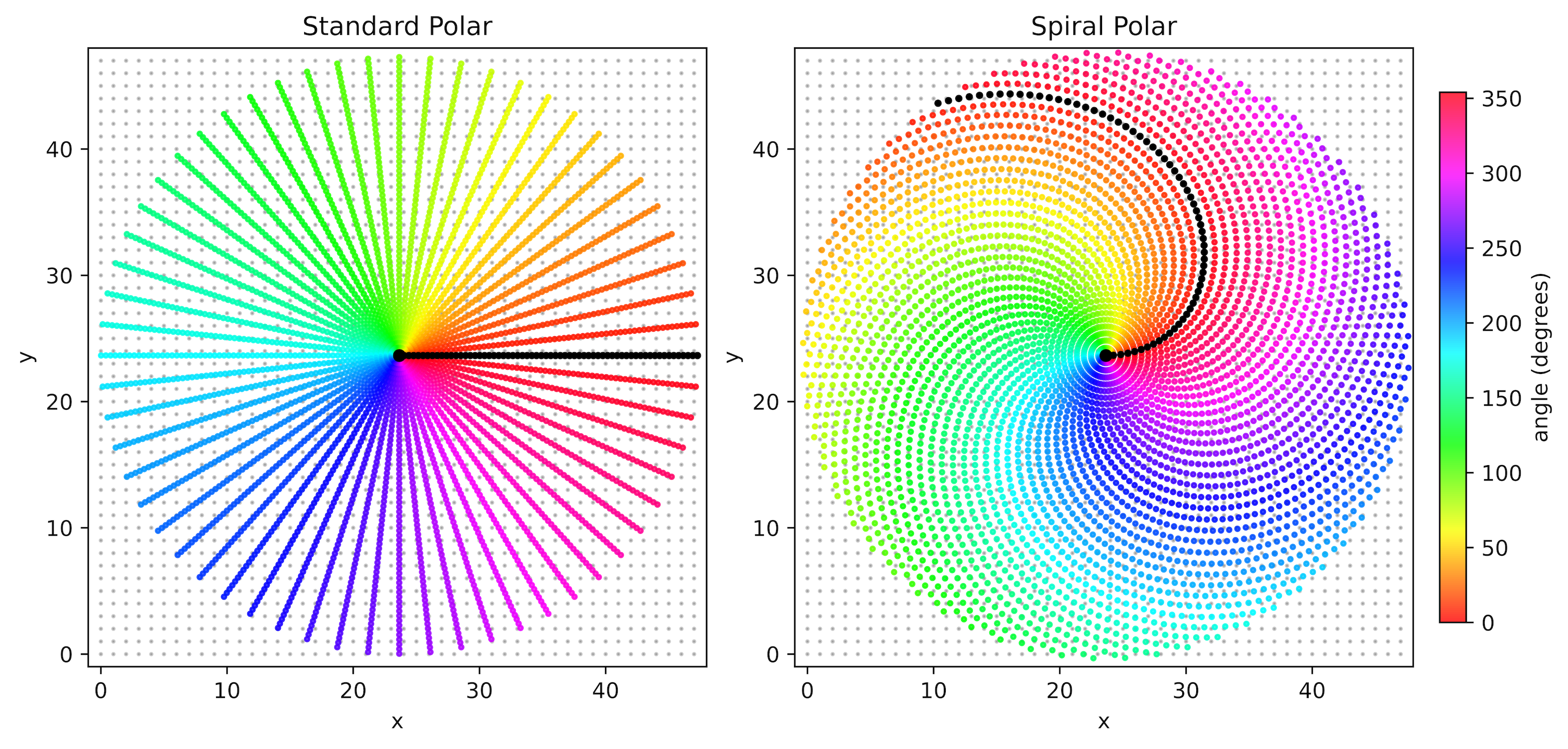}
    \caption{Visual representation of the standard polar and spiral polar coordinate systems. At large radii, the spiral polar coordinate system has nodes closer to the underlying Cartesian grid. When compared to the standard polar grid with the same number of radial and angular nodes, the spiral polar coordinate system introduces less smoothing through round-trip interpolation. Both grids have $n_\rho=64$ radial nodes and $n_\psi=60$ angular nodes}
    \label{fig:spiral-polar}
\end{figure}

When optimized to minimize round-trip interpolation error on simulated $512\times512$ pixel projections of the ribosome, we found that $n_r=1024$, $n_\psi=1200$, $c=1.0$, and $p_0=0.2$ performed best with an average error of $0.127\%$.




\newpage
\subsection{Convergence of template features on progressive HEALPix grids}

\begin{figure}[h!]
    \centering
    \includegraphics[width=1\linewidth]{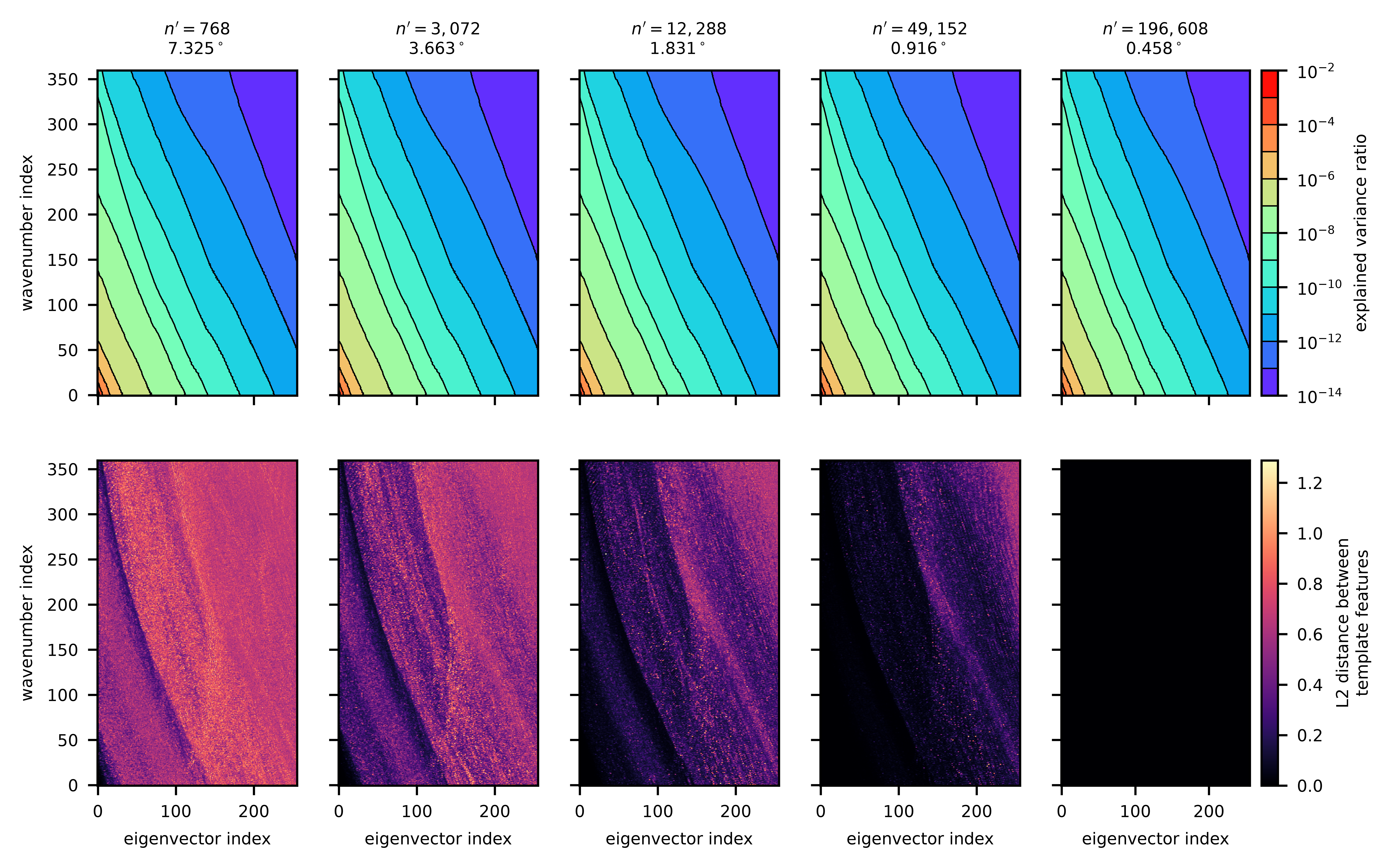}
    \caption{Comparison of singular values and template features from symmetry exploiting SVD of large ribosomal subunit projections sampled across the 5 HEALPix grids for out-of-plane orientation. (Top) Explained variance ratios for each singular value in the wavenumber and eigenvector index grid. Despite the orientation sampling spanning nearly 3 orders of magnitude, the explained variance ratios display nearly exact correspondence since each orientation sampling approximates the same underlying integral operation. (Bottom) L2 distances between template features (radial amplitude functions) extracted from each decomposition compared to the finest sampled grid. Very low wavenumber and eigenvector indices converge to the same template features even at the coarsest sampled grid. As the number of orientations increases, higher index template features converge. This observation is consistent with increased orientation sampling the underlying integral operator to a higher degree.}
    \label{fig:si-feature-similarity}
\end{figure}

\newpage
\subsection{Expected computational advantage of an incremental, multi-precision approach}

Let $X$ be a square image with height and width $H$, and let a square simulated projection (row of template matching matrix $M$) have height and width $h \ll H$. Computing the cross-correlation between the simulated projection and $X$ using the convolution theorem and FFT can be done in $\mathcal{O}(H^2\log H)$ work. One must also perform a per-pixel reduction after each cross-correlation calculation to find where the current hypothesis better than any of the previous hypotheses which can be done in $\mathcal{O}(H^2)$ work. When iterating over the $n$ hypothesis in $M$ row-by-row to find the best matching hypothesis per-location in the image, the cost of the FFT-based 2DTM search algorithm scales as:
\begin{equation}
C_\text{fft}=\mathcal{O}(n H^2\log H)
\end{equation}

A 2DTM search leveraging a symmetry exploiting SVD decomposition treats in-plane rotation separately from other hypotheses in the search such that the total number hypotheses, $n$, are a product of the non in-plane rotation hypotheses and the number of in-plane rotations $n=n'\cdot n_\psi$. For some selected $r_0$ components, a search begins with image featurization with weighted template features,
\begin{equation}
    Z=\left( \Sigma_{r_0} W_{r_0}^* \right) \star X
\end{equation}
where $\star$ is the spatial cross-correlation operator. The featurization stage, by leveraging the convolution theorem and FFT, can be done in $\mathcal{O}(r_0 H^2\log H)$ total work.

Recovery of cross-correlation values from the featurized image proceeds as $n'$ inner products per-pixel, one for each non in-plane rotation hypothesis, followed by an inverse real-valued FFT to recover in-plane rotations. Computation of the inner products scales as $\mathcal{O}(r_0 n')$, and in-plane rotation recovery scales as $\mathcal{O}(n_\psi \log n_\psi)$. Applying these computations across all $H^2$ pixels in the featruized image, assuming constant cost from the reduction stage, has the total SVD-based search cost scales as:
\begin{equation}
    C_\text{svd}=\mathcal{O}(r_0H^2 \log H) + \mathcal{O}(H^2(r_0 n' + n_\psi \log n_\psi))
\end{equation}

A low-rank SVD-based approximation for a 2DTM search with a single stage becomes computationally advantageous when $C_\text{svd}<C_\text{fft}$. Assuming $n'\gg n_\psi \log n_\psi$ and a constant cost per feature, this advantage occurs when $r_0 < n_\psi \log H$. Replacing the per-rotation search with a low-rank inner product is advantageous when the retained rank, $r_0$, falls below $n_\psi \log H$ which can be thought of as the effective per-hypothesis cost of a rotation search by direct correlation.





\restoregeometry
\end{document}